\newcommand{\cW}{\mathcal{W}}
\newcommand{\tbf}[1]{\textbf{#1}\xspace}
\newcommand{\ubar}[1]{\underline{#1}\xspace}
\newcommand{\obar}[1]{\overline{#1}\xspace}
\newcommand{\ogam}{\ensuremath{\overline{\gamma}\xspace}}
\newcommand{\mpr}{\ensuremath{\operatorname{MP}}\xspace}
\newcommand{\mpra}{\ensuremath{\operatorname{MP-KHD}}\xspace}
\newcommand{\mprb}{\ensuremath{\operatorname{MP-KIID}}\xspace}
\newcommand{\att}{\ensuremath{\mathsf{ATT}}\xspace}
\newcommand{\samp}{\ensuremath{\mathsf{SAMP}}\xspace}
\newcommand{\Var}{\ensuremath{\operatorname{Var}}\xspace}
\newcommand{\alga}{\ensuremath{\mathsf{ATT}}\xspace}
\newcommand{\algb}{\ensuremath{\mathsf{SAMP}}\xspace}
\newcommand{\optoff}{\ensuremath{\operatorname{OPT-OFF}}\xspace}
\newcommand{\opton}{\ensuremath{\operatorname{OPT-ON}}\xspace}
\newcommand{\optlp}{\ensuremath{\operatorname{OPT-LP}}\xspace}
\newcommand{\thickhline}{%
    \noalign {\ifnum 0=`}\fi \hrule height 1pt
    \futurelet \reserved@a \@xhline
}
\newcolumntype{"}{@{\hskip\tabcolsep\vrule width 1pt\hskip\tabcolsep}}
  \providecommand\BibTeX{{%
    \normalfont B\kern-0.5em{\scshape i\kern-0.25em b}\kern-0.8em\TeX}}}
\begin{document}

\title[Tight Competitive and Variance Analyses of Matching Policies in Gig Platforms]{Tight Competitive and Variance Analyses of Matching Policies in Gig Platforms}

\author{Pan Xu}\authornote{Pan Xu was partially supported by NSF CRII Award  IIS-1948157. The author would like to thank the anonymous reviewers for their valuable comments.}
\orcid{0000-0002-6362-6727}
\affiliation{%
  \institution{Department of Computer Science, New Jersey Institute of Technology, Newark, USA}
    \city{}
  \country{}
}
 \email{pxu@njit.edu}
\renewcommand{\shortauthors}{Pan Xu}

\begin{abstract}
The gig economy features dynamic arriving agents and on-demand services provided. In this context, instant and irrevocable matching decisions are highly desirable due to the low patience of arriving requests. In this paper, we propose an online-matching-based model to tackle the two fundamental issues, matching and pricing, existing in a wide range of real-world gig platforms, including ride-hailing (matching riders and drivers), crowdsourcing markets (pairing workers and tasks), and online recommendations (offering items to customers). Our model assumes the arriving distributions of dynamic agents (e.g., riders, workers, and buyers) are accessible in advance, and they can change over time, which is referred to as \emph{Known Heterogeneous  Distributions} (KHD).
 
In this paper, we initiate variance analysis for online matching algorithms under KHD. Unlike the popular competitive-ratio (CR) metric, the variance of online algorithms' performance is rarely studied due to inherent technical challenges, though it is well linked to robustness. We focus on two natural parameterized sampling policies, denoted by $\mathsf{ATT}(\gamma)$   and $\mathsf{SAMP}(\gamma)$, which appear as foundational bedrock in online algorithm design. We offer rigorous competitive ratio (CR) and variance analyses for both policies. Specifically, we show that $\mathsf{ATT}(\gamma)$ with $\gamma \in [0,1/2]$ achieves a CR of $\gamma$ and a variance of $\gamma \cdot (1-\gamma) \cdot B$ on the total number of matches with $B$ being the total matching capacity. In contrast, $\mathsf{SAMP}(\gamma)$ with $\gamma \in [0,1]$ accomplishes a CR of $\gamma (1-\gamma)$ and a variance of $\overline{\gamma} (1-\overline{\gamma})\cdot B$ with $\overline{\gamma}=\min(\gamma,1/2)$. All CR and variance analyses are tight and unconditional of any benchmark. As a byproduct, we prove that $\mathsf{ATT}(\gamma=1/2)$ achieves an optimal CR of $1/2$. 
\end{abstract}

\begin{CCSXML}
<ccs2012>
   <concept>
       <concept_id>10003752.10003809.10010047</concept_id>
       <concept_desc>Theory of computation~Online algorithms</concept_desc>
       <concept_significance>500</concept_significance>
       </concept>
   <concept>
       <concept_id>10010405.10010481.10010488</concept_id>
       <concept_desc>Applied computing~Marketing</concept_desc>
       <concept_significance>300</concept_significance>
       </concept>
 </ccs2012>
\end{CCSXML}

\ccsdesc[500]{Theory of computation~Online algorithms}
\ccsdesc[300]{Applied computing~Marketing}

\keywords{Online Matching, Competitive Analysis, Variance Analysis}

\maketitle

\section{Introduction}
Markets in {the} gig economy feature dynamic arriving agents that are typically called \emph{online} as opposed to static agents called \emph{offline} and an instant decision-making requirement due to the low patience of online agents. Examples of gig markets include ride-hailing platforms, crowdsourcing markets, and online recommendations. There are two main categories of studies for gig platforms. The first focuses on the matching issue regarding how to pair up online and offline agents in the system, say, \eg matching (online) riders and (offline) drivers in ride-hailing~\citep{DBLP:conf/www/Oda21,Patrick-18-JAI,tong2016icde}, pairing (online) workers and (offline) tasks in crowdsourcing markets~\citep{aamas-19,dickerson2018assigning}, and offering (offline) items to (online) customers in online recommendations~\citep{fata2019multi}. These works aim to design an efficient matching policy to facilitate as many matches as possible. The second considers another one, called pricing,~\citep{ma2018spatio,tong2018dynamic,bimpikis2019spatial,kanoria2019near,Banerjee-ec-17,Banerjee:2016}, where they typically assume dynamic arriving agents have known or unknown value distributions and we need to design a pricing strategy to promote the total profit flowing into the system.

In this paper, we consider the two issues of matching and pricing simultaneously. Specifically, we assume that the arriving distributions of dynamic agents (\eg riders in ride-hailing, workers in crowdsourcing, and buyers in online recommendations) and their value functions toward prices are all accessible in advance as part of the input. Our assumptions are inspired  {by} the {fact} that we can typically learn and estimate these distributions by applying powerful machine learning techniques to massive historical data~\citep{feng2021scalable,shi2021learning,cheung2018inventory}. We present our model in detail as follows.

\xhdr{Matching and Pricing in Gig Economies}.  We use a bipartite graph $G=(I,J,E)$ to model the network between~{a set of offline (static) agent types} $I$ and~{a set of online (dynamic) agent types} $J$, where an edge $e=(i,j)$ indicates the feasibility of matching agents between types $i$  and  $j$ due to practical constraints. Examples include spatial and temporal constraints between a driver (of type) $i$ and a rider (of type) $j$ in ride-hailing, and the potential interest of a customer $j$ toward an item $i$ in online recommendations. We have a groundset of prices, denoted by $\cA=\{a_k| k\in [K]=\{1,2,\ldots,K\}\}$. We refer to each tuple $f=(i,j,k)$ with~{ $(i,j) \in E$ } as an assignment, which represents to match an online agent $j$ with an offline agent $i$ and charge $j$ at the price of $a_k$. \emph{Thus, our assignment consists of two parts: matching and {pricing}}. For each offline agent $i \in I$, it has a  capacity $b_i \in \mathbb{Z}_{+}$ representing an upper bound on the number of matches involving $i$. This captures matching caps imposed on $i$, which can be interpreted as the total number of drivers of type $i$  featuring a specific working location in rideshare, the total amount of item $i$ in stock in online recommendations, \etc
 
The online {arrival} process is as follows. We have a time horizon of $T$ rounds and during each round (or time) $t \in [T]:=\{1,2,\ldots,T\}$, one single online agent $\hat{j}$ will be sampled from $J$ such that $\Pr[\hat{j}=j]=q_{j,t}$ with $\sum_{j\in J} q_{j,t}=1$ (called \emph{$j$ arrives at $t$}).\footnote{We can always make it equal by creating a dummy node whose arrival simulates the case of no arrival at $t$.} Let $\cF=\{f=(i,j,k)| (i,j) \in E\}$ be the collection of all feasible assignments. For each $f=(i,j,k)\in \cF$ and $t \in[T]$, it is associated with a probability $p_{f,t} \in (0,1]$ and a non-negative profit $w_{f,t}$, which denote the chance of online agent $j$ accepts offline agent $i$ with the price of $a_k$ at time $t$ and the total profit flowing into the system, respectively. Upon the arrival of online agent $j$ at $t$, we (as the algorithm) should either reject $j$ or select an assignment $f=(i,j,k)$ involving $j$. Our decision is instant and irrevocable, and it is then followed by a Bernoulli random choice of $j$ toward the assignment $f=(i,j,k)$: with probability $p_{f,t}$, $j$ accepts $f$ (then $i$'s capacity gets reduced by one and we get a profit of $w_{f,t}$); and with probability $1-p_{f,t}$, $j$ rejects $f$ and goes away.\footnote{Generally, the profit $w_{f,t}$ collected by the system from $f=(i,j,k)$ at $t$ can be estimated as the price $a_k$ charged to agent $j$ minus the payoff for agent $i$.}

An input instance can be characterized as~\\
$\cI=\{G=(I,J,E), \cA, \{b_i\}, T, \{q_{j,t}\}, \{p_{f,t},w_{f,t}\}\}$, whose information is all accessible to the algorithm. \emph{Our goal is to design a matching-and-pricing policy such that the total profit is maximized}. Observe that we allow the arrival distributions of online agents to change over time, which is referred to as \emph{Known Heterogeneous  Distributions} (KHD).\footnote{This is also called \emph{Known Adversarial  Distributions} in related studies~\citep{wu2023online,dickerson2018allocation}.} In light of this, we refer to our model as \emph{Matching and Pricing under Known Heterogeneous  Distributions} (\mpra).

\begin{comment}
We also consider a special case when arrival distributions of online riders, the acceptance probabilities and rewards over assignments are all uniform over time, \ie for each given $f$ and $j$, $q_{j,t}=q_j$, $p_{f,t}=p_f$, and $w_{f,t}=w_f$ for all time $t$. We refer to this special case as Matching and Pricing in Ride-hailing under Known Independent and Identical Distributions (\mprb).
\end{comment}

\xhdr{Remarks on the Sources of Randomness in \mpra}.  There are three sources of randomness in total: the first is random arrivals of online agents (\tbf{Q1}); the second is potentially randomized choices of assignments made by a policy (\tbf{Q2}); the third is random acceptance/rejection decisions from online agents over assignments presented by a policy (\tbf{Q3}). We assume (\tbf{Q1}) and (\tbf{Q3}) are independent {of} each other. Also, the arriving distributions of online agents in (\tbf{Q1}) and their Bernoulli random choices in (\tbf{Q3}) are all independent over time. 

\begin{comment}
 (1) The decision of rejecting $j$ at $t$ can be addressed by creating a dummy price $a_{k'}=\infty$ with $p_{i,j,k',t}=0$ for all $(i,j)\in E$ and all $t$;
 (3) We focus on \emph{safe} policies only throughout this paper. This means that before we make an assignment $f=(i,j,k)$, we have to ensure driver $i$ has at least one unit capacities.\footnote{Note that unfor}
\end{comment}

\subsection{Preliminaries}
\begin{comment}
  Throughout this paper, we denote $[n]=\{1,2,\ldots, n\}$ for a generic positive integer $n$; we use to denote both of a clairvoyant optimal policy and the corresponding performance when the context is clear; the same applies to \ALG, which denotes both a generic algorithm and its performance when appropriate.
\end{comment}  
  \tbf{Competitive Ratio (CR)}.  CR is a metric commonly used to evaluate the performance of online algorithms. Consider a given (online) policy (or algorithm) $\ALG$ and a clairvoyant optimal (\optoff). Note that $\ALG$ is subject to the real-time decision-making requirement, \ie $\ALG$ has to make an irrevocable decision upon every arrival of online agents before observing future arrivals. In contrast, $\optoff$ is exempt from that requirement:~it enjoys the privilege of observing the full arrival sequence of online agents \emph{before} optimizing decisions.   Consider a given instance $\cI$ of \mpr, and let $\E[\ALG]$ and $\optoff$ denote the  \emph{expected} total profit achieved by \ALG and \optoff, respectively. We say $\ALG$ achieves a CR of at least $\rho \in [0,1]$ if $\E[\ALG] \ge \rho \cdot \optoff$ for all possible instances $\cI$. Essentially, CR captures the gap between a policy and a clairvoyant optimal due to the real-time decision-making requirement imposed on the former.

%start comment
\begin{comment}
The example below shows that no policy can achieve a CR better than $1/2$ for \mpra.

\begin{example}\label{exam:hard}
Consider an instance of \mpra as follows. We have one single offline-agent type (denoted by *) with a unit capacity, three online-agent types, and $T=2$. We have one single price and thus, each edge itself represents an assignment. All assignments have an acceptance probability of $1$. When $t=1$, $j=1$ will arrive with probability $1$ and when $t=2$, $j=2$ and $j=3$ will arrive with respective probabilities of $\ep$ and $1-\ep$.
The profits are specified as follows: $w_{(*,1),t}=1$, $w_{(*,2),t}=1/\ep$, and $w_{(*,3),t}=0$, for $t=1,2$.

Consider a given online algorithm $\ALG$ and suppose it will accept $j=1$ at $t=1$ with probability $\alp \in [0,1]$. Then we see $\E[\ALG]=\alp+(1-\alp) \cdot (1/\ep) \cdot \ep =1$. As for a clairvoyant optimal \OPT, we see $\E[\OPT]=\ep \cdot 1/\ep+(1-\ep) \cdot 1=2-\ep$. Thus, we claim that no online algorithm can beat a ratio of $1/(2-\ep)$ for any $\ep>0$.
\end{example}
\end{comment}

\noindent \tbf{Variance Analysis in Online Algorithm Design}. For online optimization problems like \mpra, we can run any policy \emph{only once} on any given instance. This highlights the importance of variance analysis in online algorithm design, which offers valuable insights into the robustness of related policies. As pointed out by~\cite{xu2022exploring}, ``the CR-metric reflects only the gap between an online policy (\ALG) and a clairvoyant optimal in terms of their  \emph{expected} performance: it has no guarantee on the variance or robustness of \ALG.''  Additionally, \emph{we want to stress that variance analysis plays a critical role in risk evaluation{,} particularly for maximization problems as studied here compared with minimization versions.}  For minimization problems (say, minimizing some cost), the information of expectation itself can help us upper bound the risk. Let $X \ge 0$ be the total cost incurred by a policy. By applying Markov's inequality, we can upper bound the risk as $\Pr[X \ge N] \le \E[X]/N$ for any target $N>0$. However, it is a totally different story for maximization as studied here. A lower bound on the expected profit offers {no guarantee} on the chance of a disastrous event {occuring}. Specifically, let $X$ denote the random profit gained by a policy. It is possible that $\Pr[X \le \ep] \ge 1-\delta$ for a given threshold  $0<\ep \ll 1$ and any $0<\delta<1$ for whatever given value of $\E[X]$ when the variance of $X$ is missing. However, after {adding} the information of any upper bound on the variance, we can immediately estimate the risk by citing concentration inequalities, \eg $\Pr[X \le \ep] \le \Pr[|X-\mu| \ge \mu-\ep] \le \Var[X]/(\mu-\ep)^2$ for any $\ep<\mu=\E[X]$.

Inspired by the work of~\citep{xu2022exploring}, we focus on the variance on \emph{the total number of successful assignments (scheduled and accepted)}, which is due to the randomness as outlined in (\tbf{Q1, Q2} and \tbf{Q3}); see \tbf{Remarks on sources of randomness in \mpra}.\footnote{As noted by~\cite{xu2022exploring}, variations in profits can lead to an unbounded variance in the total profit even for simple deterministic policies; thus, we study an unweighted version to make our problem technically tractable.} A detailed {discussion} on the similarities and differences from the work~\citep{xu2022exploring} can be seen in Section~\ref{sec:main-con}.

\begin{table*}[t!]
\centering
\caption{Notations used throughout this paper.}
\label{table:notation}
\begin{tabular}{ll } 
 \hline
  $[n]$ & Set of integers $\{1,2,\ldots,n\}$ for a generic integer $n$.  \\ 
$G$ & Input network graph $G=(I,J,E)$. \\
$I$ ($J$) & Set of offline static agent types (online dynamic types).\\
$\cA=(a_k)$ & Groundset of prices. \\
$f=(i,j,k)$ & An assignment of matching agents (of type) $j$ and  $i$ and charging $j$ a price of $a_k$.\\
$T$ & Time horizon. \\
 $\mathbf{q}_t=(q_{j,t})$ & Arriving distribution at $t$ with $\sum_{j \in J} q_{j,t}=1$ for all $t \in [T]$.\\
$p_{f=(i,j,k),t}$ &  Probability that agent $j$ accepts $i$ with price of $a_k$ at time $t$. \\
 $w_{f,t}$ & Profit associated with assignment $f$ if it is successfully made at $t$. \\
$b_i$ & Matching capacity of agent $i \in I$. \\  
$B$ & $B=\sum_{i \in I} b_i$, the total matching capacities among all offline agents. \\  
$\optlp$ & Optimal value of Benchmark LP~\eqref{obj-1}. \\
$\optoff$ & A clairvoyant optimal and its corresponding performance. \\
$\opton$ & {An} online optimal and its performance (subject to the real-time decision restrictions). \\
 \hline
\end{tabular}
\end{table*}

\subsection{Benchmark Linear Program (LP)}
For the {ease} of presentation, \emph{we assume WLOG that $b_i=1$ for all $i \in I$ by creating $b_i$ copies of offline agent $i$ where each has a unit capacity}. Thus, the total capacities $B=\sum_{i \in I}b_i=|I|$. For each assignment $f=(i,j,k) \in \cF$, let $x_{f,t}$
be the probability that $f$ is selected in a clairvoyant optimal \optoff during round $t$, which includes the probability that $j$ arrives at $t$ (but excludes that $f$ is accepted or rejected by $j$). For each given $j \in J$ and $i \in I$, let $\cF_j=\{f=(i,j,k)| (i,j) \in E\}$ and $\cF_i=\{f=(i,j,k)| (i,j)\in E\}$ be collections of assignments involving $j$ and $i$, respectively.
\begingroup
\allowdisplaybreaks
\begin{alignat}{2}
\max &~~\sum_{t\in [T]} \sum_{f \in \cF} x_{f,t} \cdot p_{f,t}  \cdot w_{f,t},  \label{obj-1} \\
 & \sum_{f=(i,j,k) \in \cF_j} x_{f,t} \le q_{j,t},  \forall j \in J,  &&\forall t \in [T] \label{cons:j-1} \\
 & \sum_{t \in [T]}\sum_{f=(i,j,k) \in \cF_i} x_{f,t} \cdot p_{f,t} \le b_i = 1, && \forall i \in I \label{cons:i-1} \\
  &0 \le  x_{f,t}, &&\forall f \in \cF.
\end{alignat}
\endgroup

 \begin{lemma}\label{lem:LP-1}
The optimal value of \LP~\eqref{obj-1} is a valid upper bound on the performance of a clairvoyant optimal of  \mpra. 
\end{lemma}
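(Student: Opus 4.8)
The plan is to show that the benchmark LP is a \emph{relaxation} of a clairvoyant optimal \optoff, by exhibiting a feasible solution to \LP~\eqref{obj-1} whose objective value equals the expected performance of \optoff. Fix an instance $\cI$ and consider any clairvoyant optimal policy \optoff (which may observe the full arrival sequence before committing to assignments, and which we assume to be \emph{safe}, i.e., it only selects $f=(i,j,k)$ when $i$ still has remaining capacity). For each assignment $f=(i,j,k)\in\cF$ and each round $t\in[T]$, define $x_{f,t}$ to be the probability, over all randomness in \optoff (online arrivals \tbf{Q1}, the policy's own coin flips \tbf{Q2}, and agents' Bernoulli acceptances \tbf{Q3}), that \optoff selects $f$ at time $t$. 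This is exactly the quantity the LP variables are meant to track, so the objective $\sum_{t}\sum_{f} x_{f,t}\, p_{f,t}\, w_{f,t}$ equals $\E[\optoff]$ by linearity of expectation: selecting $f$ at $t$ yields expected profit $p_{f,t}\,w_{f,t}$ (success probability times profit), and summing over all $(f,t)$ recovers the total expected profit.

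The heart of the argument is verifying the two families of constraints. For the arrival constraints~\eqref{cons:j-1}, I would argue that at any fixed round $t$, at most one online agent arrives, and it is of type $j$ only with probability $q_{j,t}$; since the assignments in $\cF_j$ can only be selected when an agent of type $j$ actually arrives at $t$, the events ``\optoff selects $f$ at $t$'' for distinct $f\in\cF_j$ are disjoint and all contained in the event ``$j$ arrives at $t$,'' whence $\sum_{f\in\cF_j} x_{f,t}\le q_{j,t}$. For the capacity constraints~\eqref{cons:i-1}, I would fix an offline type $i$ (now of unit capacity after the WLOG splitting) and observe that the total \emph{expected} number of successful matches consuming $i$ is $\sum_{t}\sum_{f\in\cF_i} x_{f,t}\, p_{f,t}$, because selecting $f$ at $t$ consumes a unit of $i$ precisely when the agent accepts, which happens with probability $p_{f,t}$ conditioned on selection. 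Since \optoff is safe and $i$ has capacity $b_i=1$, the \emph{realized} number of successful matches on $i$ is at most $1$ on every sample path; taking expectations gives $\sum_{t}\sum_{f\in\cF_i} x_{f,t}\, p_{f,t}\le 1=b_i$. Nonnegativity is immediate.

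The main obstacle I anticipate is making the capacity argument fully rigorous, since $x_{f,t}$ is an \emph{unconditional} probability (it folds in the chance that $i$ still has spare capacity at $t$), and one must be careful that the Bernoulli acceptance at $t$ is independent of the selection event so that the expected consumption factorizes as $x_{f,t}\cdot p_{f,t}$. This is where the independence assumptions stated in \tbf{Remarks on the Sources of Randomness in \mpra}---specifically that the acceptance decisions (\tbf{Q3}) are independent of arrivals (\tbf{Q1}) and, conditioned on the history, of the policy's selection---do the essential work: they justify conditioning on the selection of $f$ at $t$ and multiplying by $p_{f,t}$. Once the per-sample-path bound ``successful matches on $i$ is at most its capacity'' is invoked (which holds deterministically for any safe policy), taking expectations and combining with this factorization yields the constraint. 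I would present the expected-consumption identity and the safe-policy bound as the two key lemmas feeding into~\eqref{cons:i-1}, and conclude that the constructed $\{x_{f,t}\}$ is LP-feasible with objective $\E[\optoff]$, so $\optlp\ge\E[\optoff]$ as claimed.
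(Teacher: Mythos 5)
Your proposal is correct and follows essentially the same route as the paper's proof: both take $x_{f,t}$ to be the (unconditional) probability that \optoff selects $f$ at time $t$, match the LP objective to $\E[\optoff]$ via linearity of expectation and independence of the acceptance coin from the selection event, and verify Constraints~\eqref{cons:j-1} and~\eqref{cons:i-1} by taking expectations of the per-sample-path inequalities $\sum_{f\in\cF_j} X_{f,t}\le Y_{j,t}$ and $\sum_t\sum_{f\in\cF_i} X_{f,t}Z_{f,t}\le b_i$. Your explicit handling of the independence needed for the factorization $\E[X_{f,t}Z_{f,t}]=x_{f,t}\,p_{f,t}$ is a point the paper leaves implicit, but it is the same argument.
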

 \begin{proof}
For {ease} of notation, we use $\optoff$ to denote both a clairvoyant optimal algorithm and its corresponding performance (\ie expected amount of profit obtained). For each assignment $f=(i,j,k)$, let $X_{f,t}=1$ indicate that $f$ is selected by \optoff at $t$ (not necessarily accepted by $j$) with $\E[X_{f,t}]=x_{f,t}$. Thus, the performance can be expressed as $\optoff=\sum_{f \in \cF} \sum_{t\in [T]} w_{f,t} \cdot p_{f,t} \cdot \E[X_{f,t}]=\sum_{f \in \cF} \sum_{t\in [T]} w_{f,t} \cdot p_{f,t} \cdot x_{f,t}$, which matches the objective of \LP~\eqref{obj-1}. To prove Lemma~\ref{lem:LP-1}, it suffices to show that $\{x_{f,t}\}$ is feasible to \LP~\eqref{obj-1}.

Let $Y_{j,t}=1$ indicate that $j$ arrives at $t$ with $\E[Y_{j,t}]=q_{j,t}$. Observe that for any given $j$ and $t$, $\sum_{f \in \cF_j} X_{f,t} \le Y_{j,t}$. Taking expectation on both sides, we get Constraint~\eqref{cons:j-1}.  For each  $f=(i,j,k)$ and $t$, let $Z_{f,t} \sim \Ber(p_{f,t})$ simulate the random choices of acceptance and rejection from $j$ over $f$ at $t$. Note that $\sum_{f=(i,j,k) \in \cF_i} \sum_{t\in [T]} X_{f,t} \cdot Z_{f,t} \le b_i$ due to the matching capacity of $b_i$ on offline agent $i$. Taking expectation on both sides yields ~Constraint~\eqref{cons:i-1}. The last constraint is trivial. Thus, we justify the feasibility of $\{x_{f,t}\}$ to  \LP~\eqref{obj-1} and establish Lemma~\ref{lem:LP-1}.
 \end{proof}

\subsection{Main Contributions and Related Work} \label{sec:main-con}
In this paper, we introduce a stochastic optimization model designed to address two fundamental issues, matching and pricing, which are prevalent in a wide range of real-world matching markets. Our focus is on two \emph{natural} LP-based sampling policies that serve as foundational elements in online algorithm design: One includes attenuations, while the other does not. It is important to emphasize that these two LP-based sampling policies and their variations appear in various contexts within online matching markets, as demonstrated by examples in \cite{xu2022exploring, dickerson2018allocation, dickerson2019online, ma2018improvements, esmaeili2023rawlsian}. \emph{Therefore, our primary technical contributions do not lie in algorithm design but rather in providing rigorous and comprehensive competitive and variance analyses for these two representative policies, which we expect can be generalized to other similar settings}.

%and present rigorous competitive and variance for each of them. 

\begin{theorem}\label{thm:main-1}[Section~\ref{sec:att}]
There is an LP-based sampling policy \tbf{with} attenuations parameterized with $\gam \in [0,1/2]$, denoted by $\att(\gam)$, which achieves (1) a competitive ratio (CR) of at least $\gam$ for \mpra; (2) a variance of at most $\gam \cdot (1-\gam) \cdot B$ on the total number of successful assignments, where $B=\sum_{i \in I} b_i$, and (3) $\att(\gam=1/2)$ achieves an optimal CR of $1/2$ for \mpra. Both competitive and variance analyses are tight for any $\gam \in [0,1/2]$, which are irrespective of the benchmark LP. 
\end{theorem}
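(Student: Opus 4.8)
The plan is to run everything against the benchmark LP of Lemma~\ref{lem:LP-1}. First I would fix an optimal LP solution $\{x_{f,t}\}$ and pin down $\att(\gam)$ concretely: upon the arrival of $j$ at round $t$, sample a tentative assignment $f=(i,j,k)\in\cF_j$ with conditional probability $x_{f,t}/q_{j,t}$ (feasible since these sum to at most one by~\eqref{cons:j-1}), and, provided $i$ is still available, actually offer $f$ with an attenuation coin of probability $\alpha_{f,t}=\gam/\beta_{i,t}$, where $\beta_{i,t}:=\Pr[R_{i,t}]$ is the a~priori computable probability that $i$ survives to round $t$. The core of the competitive analysis is one invariant: because the arrival, the sampling, and the attenuation coin at round $t$ use fresh randomness independent of the history $R_{i,t}$, the probability that $i$ is consumed exactly at round $s$ equals $\gam\sum_{f\in\cF_i}x_{f,s}p_{f,s}$, so $\beta_{i,t}=1-\gam\sum_{s<t}\sum_{f\in\cF_i}x_{f,s}p_{f,s}\ge 1-\gam$ by constraint~\eqref{cons:i-1}. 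Since $\gam\le 1/2$ forces $1-\gam\ge\gam$, the coin $\alpha_{f,t}\le 1$ is a valid probability, and every $f$ is successfully made at $t$ with probability exactly $\gam\,x_{f,t}p_{f,t}$; summing weights gives $\E[\att(\gam)]=\gam\cdot\optlp\ge\gam\cdot\optoff$, i.e.\ a CR of at least $\gam$. For part~(3) the lower bound at $\gam=1/2$ is exactly this CR, while the matching upper bound comes from a two-round instance (one unit-capacity offline type, profits $1,1/\ep,0$ on three online types arriving with the right probabilities) on which every online policy is capped at $\optoff/(2-\ep)$, so no policy beats $1/2$ as $\ep\to0$.

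For the variance I would exploit that after the unit-capacity reduction each offline agent is consumed at most once, so the total number of successful assignments is $X=\sum_{i\in I}Z_i$ with $Z_i:=\mathbf{1}[i\text{ is matched}]\sim\Ber(\gam\mu_i)$ and $\mu_i:=\sum_{t}\sum_{f\in\cF_i}x_{f,t}p_{f,t}\le 1$. Then $\Var[X]=\sum_i\Var[Z_i]+\sum_{i\ne i'}\operatorname{Cov}[Z_i,Z_{i'}]$, and the whole bound reduces to two claims: (a)~the cross-covariances are non-positive, and (b)~$\sum_i\gam\mu_i(1-\gam\mu_i)\le\gam(1-\gam)B$. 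Claim~(b) is elementary: $h(\mu)=\gam\mu(1-\gam\mu)$ has derivative $\gam(1-2\gam\mu)\ge0$ for $\mu\le1$ and $\gam\le1/2$, hence $h(\mu_i)\le h(1)=\gam(1-\gam)$, and summing over the $B=|I|$ unit agents gives the target.

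The main obstacle is Claim~(a), the negative correlation $\operatorname{Cov}[Z_i,Z_{i'}]\le 0$. The intuition is sound---at most one offline agent is consumed per round, so the agents compete for a scarce stream of arrivals and conditioning on $i$ being matched can only hurt $i'$---but making it rigorous is delicate because attenuation couples the survival events $R_{i,t}$ across rounds, so the per-round ``which agent is consumed'' one-hot vectors are not independent and joint negative association cannot be read off directly. I would establish it through negative association (NA): within a single round the one-hot consumption vector is NA by the standard one-hot argument, and each $Z_i$ is a coordinatewise-monotone aggregate on the disjoint coordinate set indexing $i$, so NA closure under monotone aggregation yields $\operatorname{Cov}[Z_i,Z_{i'}]\le0$. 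To lift the single-round NA to the full sequential process I would invoke a Zero-One Principle that reduces the variance inequality to deterministic-acceptance configurations ($p_{f,t}\in\{0,1\}$), on which the process collapses to a clean sequential allocation where the NA closure can be checked. This sequential coupling over heterogeneous rounds is exactly where KHD is harder than KIID, since the invariant-probability balls-and-bins model is no longer available.

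Finally I would verify tightness of both bounds, and do so against the true optimum rather than the LP. For the variance, take $B$ disjoint non-interacting gadgets, each a single unit-capacity agent fed by a dedicated arrival stream with $\mu_i=1$; then the $Z_i$ are independent $\Ber(\gam)$ and $\Var[X]=\gam(1-\gam)B$ with no covariance slack. For the CR, exhibit an instance on which $\optlp=\optoff$ and $\att(\gam)$ collects exactly $\gam\cdot\optlp$, so the ratio $\gam$ is attained. Since both constructions meet the guarantees with equality, the analyses are tight for every $\gam\in[0,1/2]$ and independent of the benchmark LP.
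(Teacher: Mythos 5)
Your competitive-ratio analysis, the optimality argument at $\gam=1/2$, and both tightness constructions essentially coincide with the paper's proofs: the paper formalizes your survival invariant $\Pr[\SF_{i,t}=1]=\beta_{i,t}$ by induction on $t$, and its hard instance (Figure~\ref{fig:exam-1}) and disjoint-gadget instance (Figure~\ref{fig:exam-2}) are exactly the ones you describe. The genuine gap is in the variance analysis, namely your Claim~(a). You correctly identify the obstruction---the per-round ``which agent is consumed'' one-hot vectors are coupled across rounds through the availability states---but your proposed fix does not resolve it. The Zero-One Principle of \citet{dubhashi1996balls} is not a reduction device: it is the statement that $\{0,1\}$-valued random variables whose sum is at most one are negatively associated. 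It gives no license to reduce the variance inequality to configurations with $p_{f,t}\in\{0,1\}$, and such a reduction is unjustified on its own terms: $\operatorname{Cov}[Z_i,Z_{i'}]$ contains the product $\E[Z_i]\cdot\E[Z_{i'}]$ and is therefore quadratic, not multilinear, in each acceptance probability, so one cannot argue extremality at vertex configurations; moreover, even with deterministic acceptance the random arrivals and sampling coins still couple the consumption events across rounds, so the very difficulty you named would persist after the reduction.

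The repair is to choose a base family of random variables from which the availability state is absent. For $f=(i,j,k)$ let $W_{f,t}=X_{j,t}\cdot Y_{f,t}\cdot Z_{f,t}$ indicate that $j$ arrives, $f$ is sampled, and $j$ would accept---with no safety indicator. Because $\att$ uses \emph{precomputed} attenuation factors $\beta_{i,t}$ (deterministic numbers read off the LP, not the realized state), the sampling step at round $t$ uses only round-$t$ randomness; hence the families $\cW_t=\{W_{f,t}: f\in\cF\}$ are independent across $t$, while within a round $\sum_{f\in\cF}W_{f,t}\le 1$, so each $\cW_t$ is negatively associated by the Zero-One Principle, and the whole collection $\{W_{f,t}: f\in\cF, t\in[T]\}$ is NA. Finally, with unit capacities one has exactly $H_i=\min\bigl(1,\sum_{t\in[T]}\sum_{f\in\cF_i}W_{f,t}\bigr)$: the first time some $W_{f,t}=1$ with $f\in\cF_i$, agent $i$ is necessarily still safe, since any earlier consumption of $i$ would itself require an earlier such event; so that first event is a successful assignment, and conversely every successful assignment is such an event. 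Thus each $H_i$ is a monotone function of a disjoint coordinate block of an NA family, the $\{H_i\}$ are NA by closure under monotone aggregation, and $\Var[H]\le\sum_{i\in I}\Var[H_i]\le\gam\cdot(1-\gam)\cdot B$ follows exactly as in your Claim~(b). This substitution of sample-and-accept indicators for consumption indicators is precisely how the paper closes the step you left open, and it is also where the design choice of non-adaptive attenuation in $\att$ earns its keep.
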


\begin{theorem}\label{thm:main-2}[Section~\ref{sec:samp}]
There is an LP-based sampling policy \tbf{without} attenuations parameterized with $\gam \in [0,1]$, denoted by $\samp(\gam)$, which achieves (1) a competitive ratio (CR) of at least $\gam \cdot (1-\gam)$ for \mpra; (2) a variance of at most $\ogam \cdot (1-\ogam) \cdot B$ on the total number of successful assignments, where $\ogam=\min(1/2, \gam)$ and $B=\sum_{i \in I} b_i$. Both competitive and variance analyses are tight for any $\gam \in [0,1]$, which are irrespective of the benchmark LP. 
\end{theorem}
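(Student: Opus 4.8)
The plan is to analyze $\samp(\gam)$ in the form its name suggests. First solve the benchmark \LP~\eqref{obj-1} for an optimal $\{x_{f,t}\}$; then, when agent $j$ arrives at round $t$, sample the assignment $f=(i,j,k)\in\cF_j$ with probability $\gam\, x_{f,t}/q_{j,t}$ (and idle otherwise, which is legitimate because $\sum_{f\in\cF_j}x_{f,t}\le q_{j,t}$ by~\eqref{cons:j-1} and $\gam\le1$); if the sampled $i$ is still free we offer $f$ and let $j$ accept it with probability $p_{f,t}$ (the un-attenuated, \emph{safe} execution). After the reduction to $b_i\equiv1$, the number of successful assignments is $N=\sum_{i\in I}X_i$ with $X_i=\mathbb{1}[i\text{ is matched}]$ and $B=|I|$. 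For the competitive ratio I would work resource by resource. Let $R_{i,t}$ be the event that $i$ is free at the start of round $t$. Since the arrival, sampling, and acceptance draws at $t$ are independent of the past, $\samp(\gam)$ commits $f=(i,j,k)$ successfully at $t$ with probability exactly $\Pr[R_{i,t}]\cdot\gam\, x_{f,t}\, p_{f,t}$. The key estimate is the survival bound $\Pr[R_{i,t}]\ge 1-\gam$: as $i$ has unit capacity, its consumption events over rounds are mutually exclusive, so $\Pr[i\text{ consumed before }t]=\sum_{t'<t}\sum_{f\in\cF_i}\Pr[R_{i,t'}]\,\gam\, x_{f,t'}p_{f,t'}\le \gam\sum_{t'}\sum_{f\in\cF_i}x_{f,t'}p_{f,t'}\le\gam$, where the last inequality is Constraint~\eqref{cons:i-1}. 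Summing the per-round expected profit through $i$ and using this bound gives at least $\gam(1-\gam)\sum_{t}\sum_{f\in\cF_i}x_{f,t}p_{f,t}w_{f,t}$; adding over $i\in I$ and invoking Lemma~\ref{lem:LP-1} yields a CR of at least $\gam(1-\gam)$.

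For the variance I would first linearize the dynamics by a coupling: pre-draw for each round the arriving agent, the sampled assignment, and all acceptance coins, and let $P_t$ be the assignment that is simultaneously \emph{sampled and accepted} at $t$ (``none'' otherwise). Then the $P_t$ are independent across rounds, $\Pr[P_t=f]=\gam\, x_{f,t}p_{f,t}$, and $P_t$ points to at most one offline agent. Inducting on the first round that points to $i$ shows the safe policy matches $i$ iff some $P_t$ points to $i$, i.e.\ $X_i=\mathbb{1}[\exists t:\,P_t\in\cF_i]$. Writing $\lambda_{i,t}=\gam\sum_{f\in\cF_i}x_{f,t}p_{f,t}$ and using round-wise independence, $\E[(1-X_i)(1-X_{i'})]=\prod_t(1-\lambda_{i,t}-\lambda_{i',t})\le\prod_t(1-\lambda_{i,t})(1-\lambda_{i',t})=\E[1-X_i]\,\E[1-X_{i'}]$, the inequality being the term-wise $1-a-b\le(1-a)(1-b)$, valid since ``$P_t$ points to $i$'' and ``$P_t$ points to $i'$'' are disjoint so $\lambda_{i,t}+\lambda_{i',t}\le1$. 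Hence $\operatorname{Cov}[X_i,X_{i'}]\le0$ for all $i\ne i'$ (more strongly, the per-round one-hot structure together with independence across rounds makes $\{X_i\}$ negatively associated), so $\Var[N]\le\sum_i\Var[X_i]=\sum_i\mu_i(1-\mu_i)$ with $\mu_i=\E[X_i]$. Finally $\mu_i\le\sum_t\lambda_{i,t}\le\gam$ by~\eqref{cons:i-1}, and since $u\mapsto u(1-u)$ increases on $[0,1/2]$ and never exceeds $1/4$, we get $\mu_i(1-\mu_i)\le\ogam(1-\ogam)$ with $\ogam=\min(\gam,1/2)$ in both regimes $\gam\le1/2$ and $\gam>1/2$; summing over the $B$ agents gives the bound.

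For tightness I would use the following instances, all with a single price. The variance bound is matched by $B$ disjoint one-round gadgets, each a single agent $i$ whose unique online type arrives with probability $\min(1,1/(2\gam))$ and is accepted with probability one; this forces $\mu_i=\ogam$ and makes the $X_i$ independent, so $\Var[N]=B\,\ogam(1-\ogam)$ exactly. The competitive ratio is matched by a single agent $i$ over two rounds: at $t=1$ a sure arrival offers a worthless-but-capacity-consuming assignment (profit $\delta\to0^{+}$), and at $t=2$ a rare arrival (probability $\ep$) offers a high-value assignment (profit $1/\ep$); every optimal \LP\ solution is forced to load $i$ early because $\delta>0$ and the late arrival uses only $\ep$ capacity, driving the survival factor to $1-\gam$ while the captured early value vanishes, so $\E[\samp(\gam)]/\optoff\to\gam(1-\gam)$ as $\delta,\ep\to0^{+}$, with $\optoff$ equal to the \LP\ value (hence tight irrespective of the benchmark).

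The step I expect to be most delicate is the variance half: making the coupling $X_i=\mathbb{1}[\exists t:\,P_t\in\cF_i]$ rigorous for arbitrary acceptance probabilities, and thereby the negative-association/Zero-One argument, is exactly where the heterogeneity of the arrival distributions (KHD) obstructs the balls-and-bins shortcuts available under identical distributions. The CR-tightness instance also requires care, since the exact value $\gam(1-\gam)$ only emerges in the joint limit $\delta,\ep\to0^{+}$ and one must check that the forced early loading holds for \emph{every} optimal \LP\ solution.
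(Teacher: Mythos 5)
Your proposal is correct, and its skeleton matches the paper's: the same un-attenuated sampling rule $\gam\, x_{f,t}/q_{j,t}$, the same survival bound $\Pr[R_{i,t}]\ge 1-\gam$ driven by Constraint~\eqref{cons:i-1} (the paper writes it as a product $\prod_{t'<t}(1-\sum_{f\in\cF_i}\gam\, x_{f,t'}p_{f,t'})$ and lower-bounds it by the sum, whereas you sum mutually exclusive consumption events directly --- the same estimate), the same reduction $H_i=\min(W_i,1)$ identifying ``$i$ matched'' with ``some round samples and accepts an assignment in $\cF_i$,'' the same bound $\E[H_i]\le\gam$, and tightness gadgets of the same design (your CR gadget uses a vanishing profit $\delta$ in place of the paper's unit-profit early arrival, and your variance gadget scales the arrival probability by $\min(1,1/(2\gam))$ where the paper scales the acceptance probability; both are minor variants that work). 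The one genuinely different ingredient is how you justify $\Var[H]\le\sum_i\Var[H_i]$. The paper invokes the negative-association machinery: the Zero-One Principle to get NA of $\{W_{f,t}\}$ within a round, independence across rounds, closure of NA under non-decreasing functions of disjoint index sets, and then Shao's comparison theorem. You instead compute pairwise covariances directly from the coupling: by round-independence and the one-hot structure, $\E[(1-X_i)(1-X_{i'})]=\prod_t(1-\lambda_{i,t}-\lambda_{i',t})\le\prod_t(1-\lambda_{i,t})(1-\lambda_{i',t})$, so $\operatorname{Cov}[X_i,X_{i'}]\le 0$, and pairwise non-positive covariance already gives $\Var[\sum_i X_i]\le\sum_i\Var[X_i]$. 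This is more elementary and self-contained --- it needs no external NA citations, and its correctness hinges only on the disjointness $\cF_i\cap\cF_{i'}=\emptyset$ and independence across rounds, both immediate for $\samp$ (crucially, this shortcut exploits that $\samp$'s sampling probabilities are state-independent; it would not transfer as cleanly to $\att$, whose coupling is what the paper's NA route is built to handle). What the paper's route buys in exchange is the strictly stronger conclusion that $\{H_i\}$ are negatively associated, which would also support Chernoff-type tail bounds, not just the variance inequality.
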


%Finally, we implement both $\att$ and $\samp$ and compare them to several heuristics on a real dataset provided by DiDi, Inc., collected in Haikou, China. Detailed results can be seen in  {the} Appendix.

\begin{table*}[th!]
\centering
\caption{Comparison of the competitive ratio (CR) and variance achieved by the two LP-based sampling policies, $\att$ \bluee{,} and $\samp$, in the two models proposed by~\cite{xu2022exploring} (results marked in \bluee{blue}) and in this paper (marked in \redd{red}). For the validity of the assessment, the results in~\citep{xu2022exploring} listed below are obtained when $\Del=1$ (the number of resources incurred) since each assignment here could cost one unit of the corresponding offline agent's matching capacity only. The terms ``conditionally'' and ``unconditionally'' are with respect to the  
benchmark LP. The unconditionally optimal CR of $1/2$ for \mpra contrasts with  the conditionally tight CR of $1-1/\sfe \sim 0.632$ for the model in~\citep{xu2022exploring}, primarily due to the strict generalization in the arrival setting from KIID~\citep{xu2022exploring} to KHD, as studied here.}
\label{table:comp}
\begin{tabular}{l "  l l l l  l } 
 \thickhline
Algorithms & Range of $\gam$ &CR Bounds  & Variance Bounds &  Best CR    \\  \thickhline
\multirow{ 4}{*}{$\att(\gam)$} & \multirow{ 2}{*} {\bluee{$\gam \in [0,1] $}}  & \multirow{ 2}{*} {\bluee{$1-\sfe^{-\gam}$}}     &\multirow{ 2}{*} {
 \bluee{$(1-\sfe^{-2\gam}-2\gam \sfe^{-\gam}) T^2$}} &  \bluee{$1-\sfe^{-1}$} 
    \\
 & & &&  \bluee{Condionally Tight} \\
 \cline{2-5}
&\multirow{ 2}{*} { \redd{$\gam \in [0,1/2]$} }& \multirow{ 2}{*} {\redd{ $\gam$}}& \multirow{ 2}{*} {\redd{$\gam (1-\gam) \cdot B$}} 
& \redd{$1/2$} \\ 
&&&& \redd{Unconditionally  Optimal}\\
[0.2cm]  \thickhline
\multirow{ 4}{*}{$\samp (\gam)$} & \multirow{ 2}{*} { \bluee{$\gam \in [0,1] $}} & \multirow{ 2}{*} {\bluee{$1-\sfe^{-\gam}$} }   & \multirow{ 2}{*} { \bluee{$(1-\sfe^{-2\gam}-2\gam \sfe^{-\gam}) T^2$}} & \bluee{$1-\sfe^{-1}$}  \\
&&&& \bluee{Condionally Tight} \\
 \cline{2-5}
&  \multirow{ 2}{*} { \redd{$\gam \in [0,1]$}}&   \multirow{ 2}{*} {\redd{ $\gam (1-\gam)$}}&  \multirow{ 2}{*} {\redd{$\obar{\gam} (1-\obar{\gam}) \cdot B, \obar{\gam}=\min(1/2,\gam)$} }& \redd{$1/4$}   \\
&&&& \redd{Unconditionally  Tight} \\
 [0.1cm]
 \thickhline
\end{tabular}
\end{table*}

\xhdr{Remarks on Theorems~\ref{thm:main-1} and~\ref{thm:main-2}}. (1) The tightness of the competitive analysis of $\att(\gam)$ unconditional of the benchmark LP means that we can identify an instance of \mpra on which $\att(\gam)$ achieves a CR \emph{equal} to $\gam$ for any $\gam \in [0,1/2]$, where the ratio is computed directly against the clairvoyant optimal by definition instead of the LP value.\footnote{Note that the latter option yields only a lower bound on CR since the LP value is an upper bound on the clairvoyant optimal by Lemma~\ref{lem:LP-1}.} Similarly, the tightness of the variance analysis suggests that we can identify an instance of \mpra on which $\att(\gam)$ achieves a variance \emph{equal} to $\gam$ on the total number of successful matches for any $\gam \in [0,1/2]$.  The same interpretation applies to $\samp(\gam)$. (2) Our analysis indicates that there exists an instance, where  $\att(\gam)$ achieves the worst (smallest) CR of $\gam$ and the worst (largest) variance of $\gam (1-\gam) B$ simultaneously for any $\gam \in [0,1/2]$; see the example shown in Figure~\ref{fig:exam-2}. However, that is not necessarily true for $\samp$. The CR worst-scenario instance for $\samp$ shown in Figure~\ref{fig:exam-3}  has a very different structure from the variance one as illustrated in Figure~\ref{fig:exam-4}. (3) The optimality of $\att(\gam=1/2)$ is unconditional of the benchmark LP. In other words, no policy can achieve a CR strictly better than $1/2$ for \mpra even compared against the clairvoyant optimal directly; see the instance shown in Figure~\ref{fig:exam-1}. (4) The value of $\gam \cdot (1-\gam)$ strictly increases when $\gam \in [0,1/2]$. Thus, for $\att(\gam)$, the worst-scenario (WS) competitive ratio and variance both increase when $\gam  \in [0,1/2]$, which suggests a larger profit could come along with a higher variance and vice versa. For $\samp(\gam)$, the same trend applies when $\gam\in [0,1/2]$, though the WS variance remains unchanged when $\gam \in [1/2,1]$. (5) As demonstrated in Theorems~\ref{thm:main-1} and~\ref{thm:main-2}, $\att(\gam)$ attains a strictly superior competitive ratio (CR) while maintaining the same variance as $\samp(\gam)$ for any $\gam \in [0,1/2]$. However, in practice, $\samp$ enjoys greater efficiency compared to $\att$ because it does not require the computation of attenuation factors, which is necessary for $\att$.

%Both CR and variance results for \att in Theorem~\ref{thm:main-1} is 

\xhdr{Comparison Against the Work of~\citep{xu2022exploring}}. We outline the distinctions between our work and that of~\citep{xu2022exploring} across three different dimensions. \tbf{Models}. Both studies focus on allocation policy design within an online-matching-based framework. However, while they concentrate solely on the matching aspect, our approach encompasses both matching and pricing. Additionally, both models accommodate stochastic arrivals of online agents under \emph{known distributions}. In contrast to the assumption of \emph{Known Identical Independent Distributions} (KIID) in~\citep{xu2022exploring}, we operate within a more general setting known as \emph{Known Heterogeneous  Distributions} (KHD), allowing for dynamic changes in arriving distributions over time.\footnote{In practice, KHD (Known Heterogeneous  Distributions) is a more realistic arrival setting than KIID (Known Identical Independent Distributions) because the arriving distributions of online agents do exhibit variations over time.} Moreover, the two models introduce randomness differently into the matching process. In~\citep{xu2022exploring}, randomness arises from the cost realization: Each match $e=(i,j)$ incurs a \emph{random} set of static resources. Any policy can match $e$ only when {a} sufficient budget remains for every possibly needed resource. Importantly, once matched, $e$ is guaranteed to be accepted by $j$ (as no pricing is involved). In contrast, our model introduces randomness in the two possible outcomes—acceptance and rejection—of an arriving agent toward the price included in an assignment.   \tbf{Algorithms}. Both papers introduce two parameterized LP-based sampling policies, denoted as $\att$ and $\samp$. The key distinction lies in the presence of attenuations in $\att$ whereas $\samp$ does not incorporate them. However, it's worth noting that the two papers propose different benchmark LPs due to variations in their respective models. Additionally, for the policy $\att$, debilitation factors are computed precisely from the LP solution, whereas in~\citep{xu2022exploring}, they are derived through Monte Carlo simulations.\footnote{Monte-Carlo simulations are widely used to approximate attenuation factors~\citep{dickerson2018allocation,feng2019linear,ma2018improvements,brubach2017,adamczyk2015improved}. Our policy $\att$ features that all attenuation factors can be pre-computed {explicitly} and {exactly}, which suggests {superiority} in efficiency in practice; see details of $\att$ in Algorithm~\ref{alg:a}.} \tbf{Results}. Both papers claim to conduct tight competitive ratio (CR) analyses for the two LP-based sampling policies. Notably, in~\citep{xu2022exploring}, the CR tightness is relative to the corresponding benchmark LP, whereas in our analysis, it is unconditional. Furthermore, our variance bounds eliminate the dependence on the length of the time horizon $T$, which is typically assumed to be $T \gg 1$ and can be far larger than the budget $B$. More differences on the results can be seen in Table~\ref{table:comp}.

\xhdr{Other Related Work}. There are quite a few previous works that have studied the matching~\citep{DBLP:conf/www/SuBJ22,DBLP:conf/www/Oda21,DBLP:conf/www/LiQJYWWWY19} and pricing issues~\citep{DBLP:conf/www/SolovevP21,DBLP:conf/www/FangHW17,DBLP:conf/www/BalkanskiH16,chen2019inbede} in matching markets.  Unlike our setting here, most of them assume at least part of the input is unknown, say, \eg  arriving distributions of online agents and/or  acceptance and rejection probabilities are unknown. As a result, they proposed some machine-learning-based frameworks to manage the learning tasks during matching and pricing, among which reinforcement  learning and multi-armed bandits are two of the most common paradigms.
Another line of research studies {has} considered matching and pricing jointly in ride-hailing platforms but under an essentially static setting, where requests are assumed known in advance instead {of} arriving dynamically~\citep{shah2022joint,lesmana2019}. In that case,  authors typically utilize integer linear programming to resolve the matching issue. 

Matching and pricing have also received significant attention in {the} Operations Research community. We list a few examples as follows.~\citet{ozkan2020dynamic} have considered matching policy design for ride-hailing services, where they assume both drivers and riders join and depart the system stochastically, and thus, random sojourn time is allowed for every agent.~\citet{mahavir2020dynamic} have studied a similar setting featured by a two-sided arrival model, and they mainly utilize MDP-based techniques to address the matching and pricing issues.~\citet{vera2021online} have investigated online resource allocation but focus on evaluating the performance in terms of regrets, the difference in the profit between a policy and a Prophet, which is very different from the CR metric here.

Note that our model generalizes a well-studied model known as \emph{online matching with stochastic rewards}, which has garnered considerable interest~\cite{huang2023online, goyal2023online, mehta2012online}. However, most studies focus on an unweighted version under the adversary arrival setting.~\citet{brubach2016online} have explored a weighted version but under the KIID arrival setting, which is strictly a special case of KHD. Very recently,~\citet{dinitz2024controlling} revisited the classical ski-rental problem, examining how an optimal policy changes when the chance of the competitive ratio exceeding a specific value is upper-bounded by a preset constant.
%\ie the arrival sequence of online agents  is fixed by an adversary and unknown to the algorithm.

\section{An LP-based Sampling Policy with Attenuations (\alga)}\label{sec:att}
Slightly abusing the notation, we use $\{x_{f,t}\}$ to denote an optimal solution to \LP~\eqref{obj-1}. The overall picture of $\att(\gam)$ with $\gam \in (0,1/2]$ works {as} follows: During each time $t$, we sample an assignment $f=(i,j,k) \in \cF_j$ with probability $(x_{f,t}/q_{j,t}) \cdot (\gam/\beta_{i,t})$ upon the arrival of online agent $j$, where $\gam$ represents the target competitive ratio we aim to achieve, and $\beta_{i,t}$ is a pre-calculated attenuation factor. The formal statement of $\att(\gam)$ is as follows.
\begin{algorithm}[h!]
\DontPrintSemicolon
\textbf{Offline Phase}: \;
Solve  \LP~\eqref{obj-1} and let $\{x_{f,t}\}$ be an optimal solution.\;
Compute $\beta_{i,t}=1- \gam \sum_{t'<t} \sum_{f\in \cF_i} x_{f,t'} \cdot p_{f,t'}$ for all $i \in I$ and $t \in [T]$.\;
\textbf{Online Phase}:\;
 \For{$t=1,\ldots,T$}{
 Let an online agent (of type) $j$ arrive at time $t$. \;
 Sample an assignment $f=(i,j,k) \in \cF_j$ with probability $\frac{x_{f,t}}{q_{j,t}}\frac{\gam}{ \beta_{i,t}}$.  \label{alg:on-1}\;
  \bluee{\tcc{ {At most one assignment will be sampled in Step~\eqref{alg:on-1} since $\sum_{f \in \cF_j}   (x_{f,t}/q_{j,t}) \cdot (\gam/\beta_{i,t}) \le \sum_{f \in \cF_j}   x_{f,t}/q_{j,t} \le 1$, which follows from $\beta_{i,t} \ge \gam$ shown in Ineq.~\eqref{eqn:beta-a} and Const.~\eqref{cons:j-1} of \LP~\eqref{obj-1}}.}} 
\eIf {$i$ is safe at $t$ (the capacity of $i$ remains),}
{Select the assignment $f$ (\ie assigning $j$ to $i$ and charging $j$ a price of $a_k$);}{Reject $f$.}\label{alg:on-2}
}
\caption{A sampling policy with attenuations for $\mpra$:~$\alga(\gam), \gam \in [0,1/2]$.}
\label{alg:a}
\end{algorithm}

\subsection{Competitive Analysis of $\att(\gam)$}
\begin{theorem}\label{thm:kad-sa}
$\alga(\gam)$ with $\gam \in [0,1/2]$ achieves a competitive ratio of  $\gam$ for \mpra.
\end{theorem}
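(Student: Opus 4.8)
The plan is to reduce the whole argument to the benchmark LP. By Lemma~\ref{lem:LP-1} we have $\optoff \le \optlp$, where $\optlp = \sum_{t\in[T]}\sum_{f\in\cF} x_{f,t}\,p_{f,t}\,w_{f,t}$ is the optimal value attained by the solution $\{x_{f,t}\}$ used by the algorithm. Hence it suffices to prove the stronger, benchmark-free bound $\E[\att(\gam)]\ge \gam\cdot\optlp$. To this end I would track, for each assignment $f=(i,j,k)$ and each round $t$, the exact probability that $f$ is \emph{successfully made} at $t$ (i.e.\ $j$ arrives, $f$ is sampled in Step~\eqref{alg:on-1}, $i$ is still safe, and $j$ accepts $f$), and show its contribution to the expected profit equals exactly $\gam\cdot w_{f,t}\,x_{f,t}\,p_{f,t}$; summing over all $f$ and $t$ then yields the claim.

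The central quantity is the \emph{safe probability} $\Pr[\mathrm{SF}_{i,t}]$, where $\mathrm{SF}_{i,t}$ is the event that offline agent $i$ still has capacity at the start of round $t$. The crucial observation is that arrivals and the internal sampling at round $t$ depend only on fresh randomness at $t$ and on the precomputed constant $\beta_{i,t}$, and are therefore independent of $\mathrm{SF}_{i,t}$, which is determined entirely by rounds $1,\ldots,t-1$. Consequently the probability that $f=(i,j,k)$ is successfully made at $t$ factorizes as
\[
q_{j,t}\cdot\frac{x_{f,t}}{q_{j,t}}\cdot\frac{\gam}{\beta_{i,t}}\cdot\Pr[\mathrm{SF}_{i,t}]\cdot p_{f,t}
=\frac{\gam}{\beta_{i,t}}\,\Pr[\mathrm{SF}_{i,t}]\cdot x_{f,t}\,p_{f,t}.
\]
The key lemma I would then establish is that this safe probability is pinned down exactly: $\Pr[\mathrm{SF}_{i,t}]=\beta_{i,t}$ for every $i$ and $t$.

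I would prove the key lemma by induction on $t$, using that with $b_i=1$ each agent is matched at most once, so $\Pr[\mathrm{SF}_{i,t+1}]=\Pr[\mathrm{SF}_{i,t}]-\Pr[i\text{ matched at }t]$. Writing $s_{i,t}=\sum_{f\in\cF_i}x_{f,t}\,p_{f,t}$ and summing the factorization above over $f\in\cF_i$ gives $\Pr[i\text{ matched at }t]=(\gam\,s_{i,t}/\beta_{i,t})\,\Pr[\mathrm{SF}_{i,t}]$, hence
\[
\Pr[\mathrm{SF}_{i,t+1}]=\Pr[\mathrm{SF}_{i,t}]\Bigl(1-\frac{\gam\,s_{i,t}}{\beta_{i,t}}\Bigr)=\Pr[\mathrm{SF}_{i,t}]\cdot\frac{\beta_{i,t+1}}{\beta_{i,t}},
\]
where the last equality uses the defining recursion $\beta_{i,t+1}=\beta_{i,t}-\gam\,s_{i,t}$ read off from the definition of $\beta_{i,t}$ in Algorithm~\ref{alg:a}. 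Thus the ratio $\Pr[\mathrm{SF}_{i,t}]/\beta_{i,t}$ is constant in $t$, and since $\Pr[\mathrm{SF}_{i,1}]=\beta_{i,1}=1$, it equals $1$ throughout, giving $\Pr[\mathrm{SF}_{i,t}]=\beta_{i,t}$.

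Two points need care, and I expect the second to be the main obstacle. First, the telescoping divides by $\beta_{i,t}$, so I must verify $\beta_{i,t}>0$; by Constraint~\eqref{cons:i-1} we have $\sum_{t'}s_{i,t'}\le 1$, whence $\beta_{i,t}\ge 1-\gam\ge 1/2>0$, and the same bound (using $\gam\le 1/2$) is precisely what guarantees $\beta_{i,t}\ge\gam$ so the sampling probabilities in Step~\eqref{alg:on-1} are well defined—this is where the restriction $\gam\in[0,1/2]$ is essential. Second, and most delicately, one must rigorously justify the independence between $\mathrm{SF}_{i,t}$ and the round-$t$ arrival-and-sampling event; this rests on the modeling assumptions that arrivals and acceptance decisions are independent across rounds and that Step~\eqref{alg:on-1} uses only the fixed precomputed $\beta_{i,t}$ rather than any realized history. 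Once the key lemma holds, substituting $\Pr[\mathrm{SF}_{i,t}]=\beta_{i,t}$ cancels the $\beta_{i,t}$ factor, so each contribution is exactly $\gam\,w_{f,t}\,x_{f,t}\,p_{f,t}$, and summation gives $\E[\att(\gam)]=\gam\cdot\optlp\ge\gam\cdot\optoff$, establishing the competitive ratio of $\gam$.
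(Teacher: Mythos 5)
Your proposal is correct and follows essentially the same route as the paper's proof: reduce to the LP benchmark via Lemma~\ref{lem:LP-1}, factorize the per-round success probability using the independence of fresh round-$t$ randomness from the safety event, and establish the key identity $\Pr[\mathrm{SF}_{i,t}]=\beta_{i,t}$ by induction (the paper accumulates the sums directly where you telescope the ratio $\Pr[\mathrm{SF}_{i,t}]/\beta_{i,t}$, a purely cosmetic difference). Your handling of well-definedness via $\beta_{i,t}\ge 1-\gam\ge\gam$ likewise mirrors the paper's Inequality~\eqref{eqn:beta-a}.
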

\begin{proof}
We first justify the validity of \alga by showing that  the total sum of probabilities on Step~\eqref{alg:on-1} is no larger than one. Note that for any $i$ and $t$,
\begin{equation}\label{eqn:beta-a}
\beta_{i,t}=1-\gam \cdot \sum_{t'<t} \sum_{f\in \cF_i} x_{f,t'} \cdot p_{f,t'} \ge 1-\gam \ge \gam,
\end{equation} 
where the first inequality is due to Constraint~\eqref{cons:i-1} with $b_i=1$, and the second follows from $\gam \le 1/2$. Thus, $\sum_{f =(i,j,k) \in \cF_j} \frac{x_{f,t}}{q_{j,t}}\frac{\gam}{\beta_{i,t}} \le \sum_{f =(i,j,k) \in \cF_j} \frac{x_{f,t}}{q_{j,t}} \le 1$,
where the second inequality is due to Constraint~\eqref{cons:j-1} in \LP~\eqref{obj-1}.

For each $i$ and $t$, let $\SF_{i,t}=1$ indicate that offline agent (of type) $i$ is \emph{safe} at $t$, \ie it has one unit capacity at (the beginning of) $t$ before any online actions and $\SF_{i,t}=0$ otherwise. Let $\alp_{i,t}=\E[\SF_{i,t}]$ be the probability that $i$ is safe at $t$. For each assignment $f$, let $\chi_{f,t}=1$ indicate that $f$ is successfully made at $t$ (scheduled and accepted). We now show by induction on $t\in [T]$ that (\textbf{P1}) $\alp_{i,t}=\beta_{i,t}$ and (\textbf{P2}) $\E[\chi_{f,t}]=\gam \cdot x_{f,t} \cdot p_{f,t}$  for all $i \in I,t\in [T]$,  and $f\in \cF$. Consider the base case $t=1$. We see that $\alp_{i,t}=\beta_{i,t}=1$ for all $i \in I$. For each $f=(i,j,k)$, let $X_{j,t}=1$ indicate that $j$ arrives at $t$, $Y_{f,t}=1$ indicate that $f$ gets sampled at $t$, and $Z_{f,t}=1$ indicate that $j$ accepts $f$ at $t$. Thus, for $t=1$, 
\begin{align}
&\E[\chi_{f,t}] =\E[X_{j,t} \cdot Y_{f,t} \cdot \SF_{i,t} \cdot Z_{f,t}] \nonumber\\
&=q_{j,t} \cdot (x_{f,t}/q_{j,t}) \cdot (\gam/\beta_{i,t}) \cdot  \alp_{i,t}  \cdot p_{f,t}=\gam \cdot x_{f,t}  \cdot p_{f,t} \label{eqn:att-1}.
\end{align}

Now consider a given $t>1$ and assume (\textbf{P1}) and (\textbf{P2}) are valid for all $t'<t$. Consider a given $i \in I$. Note that
\begin{align*}
\alp_{i,t}&=\E[\SF_{i,t}]=1-\E\bb{\sum_{t'<t} \sum_{f \in \cF_i} \chi_{f,t'}}  \\
&=1-\sum_{t'<t} \sum_{f\in \cF_i} \gam \cdot   x_{f,t'}  \cdot p_{f,t'}=\beta_{i,t},
\end{align*}
where the third equality is due to the inductive assumption (\tbf{P2}) and the last one follows from the definition of $\beta_{i,t}$. We can verify that Equation~\eqref{eqn:att-1} remains valid for the given $t$ and all $f$ as long as $\alp_{i,t}=\beta_{i,t}$ for $t$ and all $i \in I$. Thus, we complete the inductive step on (\textbf{P1}) and (\textbf{P2}). By linearity of expectation, the total expected profit of $\alga(\gam)$ is equal to 
\[
\E[\att(\gam)]=\sum_{f \in \cF}\sum_{t \in [T]} w_{f,t} \cdot \E[\chi_{f,t}]=\sum_{f \in \cF}\sum_{t \in [T]} w_{f,t} \cdot \gam \cdot x_{f,t} \cdot p_{f,t},\]
 which is a fraction of $\gam$ of the optimal value of \LP~\eqref{obj-1}. By Lemma~\ref{lem:LP-1}, we claim that \alga achieves a competitive ratio at least $\gam$.
\end{proof}

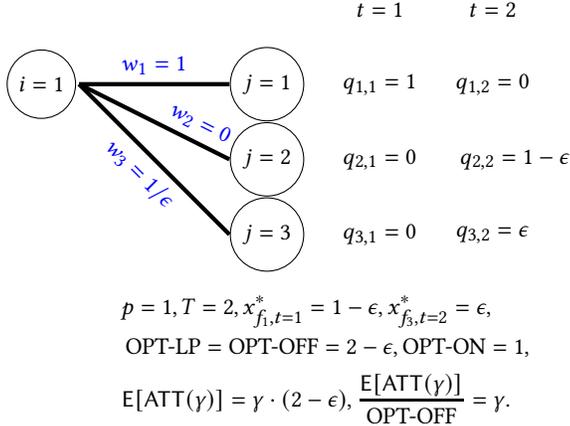
\begin{figure}[ht!]
\begin{minipage}{1\linewidth}
 \begin{tikzpicture}
 \draw (0,0) node[minimum size=0.2mm,draw,circle] {$i=1$};
 % \draw (0,-1) node[minimum size=0.2mm,draw,circle] {$i_2$};
  
  \draw (3,0) node[minimum size=1mm,draw,circle] {$j=1$};
    \draw (3,-1) node[minimum size=1mm,draw,circle] {$j=2$};
    \draw (3,-2) node[minimum size=1mm,draw,circle] {$j=3$};

\draw (4.5,1) node {$t=1$};     
\draw (6,1) node {$t=2$};   
\draw (4.5,0) node {$q_{1,1}=1$};
\draw (4.5,-1) node {$q_{2,1}=0$};
\draw (4.5,-2) node {$q_{3,1}=0$};
\draw (6,0) node {$q_{1,2}=0$};
\draw (6.3,-1) node {$q_{2,2}=1-\ep$};
\draw (6,-2) node {$q_{3,2}=\ep$};
\draw[ultra thick, -] (0.5,0)--(2.5,0) node [blue, above, midway, sloped] {$w_1=1$};

\draw[ultra thick, -] (0.5,0)--(2.5,-1) node [blue, above, near end, sloped] {$w_2=0$};
\draw[ultra thick, -] (0.5,0)--(2.5,-2) node [blue, below, midway, sloped] {$w_3=1/\ep$};
\end{tikzpicture}
\end{minipage} 
\begin{minipage}{1\linewidth}
\begin{align*}
& p =1,T=2, x_{f_1,t=1}^*=1-\ep, x_{f_3,t=2}^*=\ep,\\
&\optlp=\optoff=2-\ep, \opton=1,\\
& \E[\att(\gam)]=\gam \cdot (2-\ep), \frac{\E[\att(\gam)]}{\optoff}=\gam.
\end{align*}
\end{minipage}
\caption{An example highlighting the tightness of competitive analysis of $\att(\gam)$ for any $\gam \in [0,1/2]$ and the optimality of $\att(\gam=1/2)$, where both claims are unconditional of the benchmark LP.} \label{fig:exam-1}
\end{figure}

\subsection{Tightness of the CR Analysis of $\att(\gam)$ with $\gam \in [0,1/2]$ and the Optimality at $\gam=1/2$}
\begin{lemma}\label{lem:cr-att}
There exists an instance of $\mpra$ such that $\att(\gam)$ achieves a competitive ratio of $\gam$ for any $\gam \in [0,1/2]$. Meanwhile, $\att(\gam=1/2)$ achieves an optimal competitive ratio of $1/2$ for \mpra. 
\end{lemma}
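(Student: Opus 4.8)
The plan is to exhibit a single parametric family of instances---precisely the one drawn in Figure~\ref{fig:exam-1}, indexed by $\ep \in (0,1)$---and to extract both assertions from it by direct computation, keeping every ratio relative to the clairvoyant optimal $\optoff$ rather than to the benchmark LP value so that the conclusions come out unconditional. First I would fix the instance: a single offline type of unit capacity, three online types, $T=2$, a single price so that $p_{f,t}=1$ throughout, profits $w_1=1$, $w_2=0$, $w_3=1/\ep$ on the three edges, and arrivals $q_{1,1}=1$ at $t=1$ together with $q_{2,2}=1-\ep$ and $q_{3,2}=\ep$ at $t=2$. The clairvoyant optimal is computed by conditioning on the $t=2$ arrival: when $j=3$ appears (probability $\ep$) it withholds the capacity at $t=1$ and collects $1/\ep$, and otherwise (probability $1-\ep$) it matches $j=1$ for profit $1$, giving $\optoff=\ep\cdot(1/\ep)+(1-\ep)\cdot 1=2-\ep$.

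For the tightness claim I would solve \LP~\eqref{obj-1} on this instance, verifying that $x^*_{f_1,1}=1-\ep$ and $x^*_{f_3,2}=\ep$ is feasible (the capacity constraint~\eqref{cons:i-1} is tight at $1$ and each arrival constraint~\eqref{cons:j-1} is respected) and optimal, with value $\optlp=2-\ep$; optimality follows because the coefficient $1/\ep$ of $x_{f_3,2}$ dominates, so one fills $x_{f_3,2}$ to its cap $\ep$ and then $x_{f_1,1}$ up to the residual capacity $1-\ep$. Invoking property \tbf{(P2)} from the proof of Theorem~\ref{thm:kad-sa}, namely $\E[\chi_{f,t}]=\gam\cdot x^*_{f,t}\cdot p_{f,t}$, the expected profit of $\att(\gam)$ equals $\gam\cdot\sum_{f,t} w_{f,t}\,x^*_{f,t}\,p_{f,t}=\gam\,\optlp=\gam(2-\ep)$. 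Dividing by $\optoff=2-\ep$ yields a ratio of exactly $\gam$, computed against the clairvoyant optimal directly; since this holds for every $\ep$ and every $\gam\in[0,1/2]$, the lower bound of Theorem~\ref{thm:kad-sa} is attained with equality and is tight unconditional of the LP.

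For the optimality at $\gam=1/2$ I would show that the same instance caps \emph{every} online policy at expected profit $1$. Let an arbitrary online policy select the assignment involving $j=1$ at $t=1$ with probability $\alp$. If it does, the unit capacity is consumed and $t=2$ contributes nothing; if it does not (probability $1-\alp$), then at $t=2$ it can gain $1/\ep$ exactly when $j=3$ arrives (probability $\ep$). Hence its expected profit is at most $\alp\cdot 1+(1-\alp)\cdot\ep\cdot(1/\ep)=1$, independent of $\alp$, so $\opton=1$ and $\E[\ALG]/\optoff\le 1/(2-\ep)$. Letting $\ep\to 0$ forces this upper bound to $1/2$, so no online policy can beat a competitive ratio of $1/2$; together with $\att(\gam=1/2)$ attaining $1/2$ by Theorem~\ref{thm:kad-sa}, this establishes optimality.

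I expect the only delicate point to be the bookkeeping that keeps everything unconditional: the comparison must be made against $\optoff$ and not against $\optlp$, which is painless on this instance because the two happen to coincide at $2-\ep$, but in general they differ, and the care lies in separately certifying the clairvoyant value rather than silently substituting the LP relaxation. Everything else is routine arithmetic on a two-round instance.
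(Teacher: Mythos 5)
Your proposal is correct and follows essentially the same route as the paper: it uses the identical instance of Figure~\ref{fig:exam-1}, computes $\optlp=\optoff=2-\ep$ and $\E[\att(\gam)]=\gam(2-\ep)$ to get the exact ratio $\gam$ against the clairvoyant optimal, and derives optimality at $\gam=1/2$ from $\opton/\optoff=1/(2-\ep)\to 1/2$. The only (welcome) addition is that you spell out, via the mixing probability $\alp$, why every online policy is capped at expected profit $1$ --- a fact the paper asserts ($\opton=1$) without detailed justification.
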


\begin{example}
Consider such an instance as shown in Figure~\ref{fig:exam-1}. We have one single offline agent with a unit budget that is connected to three online agents indexed by $j=1,2,3$, respectively. 
Set $T=2$, and the arriving distributions at $t=1$ and $t=2$ are $\mathbf{q}_1=(1, 0, 0)$ and $\mathbf{q}_2=(0,  1-\ep, \ep)$ with $\ep>0$, respectively. There is one single price, and thus, each edge itself represents an assignment. For {ease} of notation, we use $j$ to index the three edges and the corresponding assignments, \ie $f_j=(i=1,j,*)$ for $j=1,2,3$. Set $p_{f,t}=1$ for all $f$ and $t$, and the profit on assignments state as follows: $w_{f_1,t}=1$, $w_{f_2,t}=0$, and $w_{f_3,t}=1/\ep$ for both $t=1,2$.  \hfill $\blacksquare$
\end{example}
\begin{proof}
Consider the example shown in Figure~\ref{fig:exam-1}. 
Let $\opton$, $\optoff$, and $\optlp$ denote the (expected) performance of an online optimal policy, that of a clairvoyant optimal, and the optimal value of Benchmark~\LP~\eqref{obj-1} on the above example. We can verify the following facts. First, \LP~\eqref{obj-1} has such an optimal solution that $x_{f_1,t=1}^*=1-\epsilon, x^*_{f_3,t=2}=\ep$ with all rest being zeros, and the corresponding  optimal value is $\optlp=2-\ep$.  The performance of a clairvoyant optimal  is $\optoff=\epsilon \cdot (1/\epsilon)+(1-\epsilon)=2-\epsilon$, while that of an online optimal is $\opton=1$, which is subject to the real-time decision-making requirements. Second, $\att(\gamma)$ with $\gamma \in [0,1/2]$ achieves a performance of $\gamma  \cdot (1-\epsilon)+\gamma \cdot \epsilon \cdot (1/\epsilon)=\gamma \cdot(2- \epsilon)$. Thus, we claim that (1) $\att(\gamma)$ achieves a CR of $\gamma (2-\epsilon)/(2-\epsilon)=\gamma$ using the benchmark of either LP~\eqref{obj-1} or the clairvoyant optimal for any $\gam \in [0,1/2]$; and (2) $\att(\gam=1/2)$ achieves a CR of $1/2$ for \mpra, which is optimal since no algorithm can beat the CR of $\opton/\optoff=1/(2-\ep)$, which approaches $1/2$ when $\ep \rightarrow 0_{+}$.
\end{proof}

\noindent\textbf{Remarks on Lemma~\ref{lem:cr-att}}. The unconditional upper bound of 1/2 for \mpra can be alternatively obtained by casting the well-known single-item prophet inequality as a special case of \mpra~\cite{feldman2014combinatorial,alaei2014bayesian,alaei2012online}. Thus, the upper bound of 1/2 naturally applies to our model.

\subsection{Variance Analysis of $\att(\gam)$}
 Recall that for each $f\in \cF$ and $t \in [T]$, $\chi_{f,t}=1$ indicates $f$ is successfully made (scheduled and accepted) in $\att(\gam)$. Let $H_i=\sum_{f\in \cF_i} \sum_{t \in [T]} \chi_{f,t}$ and $H=\sum_{i \in I}H_i$ denote the numbers of successful assignments involving $i$ and in total, respectively. Note that $B=\sum_{i \in I} b_i=|I|$.
\begin{theorem}\label{thm:kad-sb}
$\Var[H] \le \gam \cdot (1-\gam) \cdot B$ with $\gam \in [0,1/2]$.
\end{theorem}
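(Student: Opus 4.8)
The plan is to combine the unit-capacity reduction with the negative association of the per-agent match counts. Since every offline agent $i$ has unit capacity, $H_i\in\{0,1\}$ is a Bernoulli random variable, so $\Var[H_i]=\mu_i(1-\mu_i)$ with $\mu_i:=\E[H_i]$. By property (\textbf{P2}) established in the proof of Theorem~\ref{thm:kad-sa}, $\E[\chi_{f,t}]=\gam\cdot x_{f,t}\cdot p_{f,t}$, so summing over $f\in\cF_i$ and $t\in[T]$ and invoking Constraint~\eqref{cons:i-1} with $b_i=1$ gives $\mu_i=\gam\sum_{t}\sum_{f\in\cF_i}x_{f,t}\,p_{f,t}\le\gam\le 1/2$. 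Because $u\mapsto u(1-u)$ is increasing on $[0,1/2]$, this yields $\Var[H_i]=\mu_i(1-\mu_i)\le\gam(1-\gam)$ for every $i\in I$.

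It then remains to control the cross terms in $\Var[H]=\sum_{i}\Var[H_i]+\sum_{i\ne i'}\operatorname{Cov}[H_i,H_{i'}]$. The key step is to show that the family $\{H_i\}_{i\in I}$ is negatively associated, which forces $\operatorname{Cov}[H_i,H_{i'}]\le 0$ for all $i\ne i'$ and hence $\Var[H]\le\sum_i\Var[H_i]\le|I|\cdot\gam(1-\gam)=B\cdot\gam(1-\gam)$. To do so, I would first strip away the safety mechanism: for each $i$ and $t$ let $A_{i,t}=1$ indicate that some assignment $f=(i,j,k)\in\cF_i$ is sampled and accepted at $t$ (i.e.\ $A_{i,t}=\sum_{f\in\cF_i}X_{j,t}\,Y_{f,t}\,Z_{f,t}$), ignoring whether $i$ is safe. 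Because a single online agent arrives per round and at most one assignment is sampled, for every fixed $t$ we have $\sum_{i\in I}A_{i,t}\le 1$. Moreover, since the unit capacity of $i$ means the \emph{first} accepted attempt on $i$ always finds $i$ safe while all subsequent attempts are blocked, one checks that $H_i=\max_{t\in[T]}A_{i,t}=1-\prod_{t}(1-A_{i,t})$; that is, $H_i$ is a non-decreasing function of the attempts $\{A_{i,t}\}_{t}$ alone.

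I would then assemble the negative association from the standard closure properties~\citep{dubhashi1996balls}. For each fixed $t$, the zero-one variables $(A_{i,t})_{i\in I}$ satisfy $\sum_i A_{i,t}\le 1$; adjoining a dummy ``no-match'' indicator $1-\sum_i A_{i,t}$ makes the sum exactly one, and a collection of $0$–$1$ variables summing to one is negatively associated, so $(A_{i,t})_{i\in I}$ is NA by closure under taking subsets. The blocks for distinct $t$ are mutually independent (arrivals, sampling, and acceptance are independent across rounds), so the whole family $\{A_{i,t}\}_{i,t}$ is NA by closure under unions of independent NA families. Finally, $\{H_i\}_{i\in I}$ are non-decreasing functions of the \emph{disjoint} index blocks $\{A_{i,t}\}_{t}$, so they inherit negative association by closure under monotone functions of disjoint coordinates. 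This delivers $\operatorname{Cov}[H_i,H_{i'}]\le 0$ and closes the argument.

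The hardest and most delicate point is this negative-association step, specifically verifying the identity $H_i=\max_t A_{i,t}$, which relies crucially on the unit-capacity structure so that the adaptively computed safety indicator $\SF_{i,t}$ does not spoil monotonicity, and then applying the three closure rules in the correct order. Once NA is in hand, the Bernoulli bound and the arithmetic of $\mu_i(1-\mu_i)\le\gam(1-\gam)$ are routine.
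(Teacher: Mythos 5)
Your proof is correct and follows essentially the same route as the paper's: both establish negative association of $\{H_i\}$ via the Zero-One Principle applied per round (the paper works with the per-assignment indicators $W_{f,t}$ rather than your per-agent aggregates $A_{i,t}$, an immaterial difference), combine independence across rounds with closure under non-decreasing functions of disjoint index blocks, and then bound each Bernoulli variance using $\E[H_i]\le\gam\le 1/2$ from (\textbf{P2}) and Constraint~\eqref{cons:i-1}. Your identity $H_i=\max_t A_{i,t}$ is exactly the paper's $H_i=\min\bigl(1,\sum_{t}\sum_{f\in\cF_i}W_{f,t}\bigr)$, and your explicit dummy ``no-match'' indicator just makes rigorous the paper's direct invocation of the Zero-One Principle for variables summing to at most one.
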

%Observe that $\E[W_{f,t}]=x_{f,t} \cdot (\gam/\beta_{i,t}) \cdot p_{f,t}$. 
\begin{proof}
For each assignment $f=(i,j,k)$, let $X_{j,t}=1$, $Y_{f,t}=1$, and $Z_{f,t}=1$ indicate that $j$ arrives at $t$, $f$ gets sampled at $t$ in $\att(\gam)$, and $j$ accepts $f$ at $t$, respectively. For each $f=(i,j,k)$, let $W_{f,t}=X_{j,t} \cdot Y_{f,t} \cdot Z_{f,t}$. Note for each given time  $t$, we have $\sum_{f \in \cF} W_{f,t} \le \sum_{j \in J} X_{j,t} \le 1$ since at most one single assignment will be sampled in Step~\eqref{alg:on-1} of $\att(\gam)$. Thus, by the Zero-One Principle~\citep{dubhashi1996balls}, $\{W_{f,t}| f\in \cF\}$ are negatively associated for each given $t$. Observe that $\cW_t:=\{W_{f,t}| f\in \cF\}$ are independent from $\cW_{t'}$ as long as $t \neq t'$. As a result, we claim $\{W_{f,t}| f \in \cF, t\in [T]\}$ are negatively associated.

Let $\cW_i=\{W_{f,t}| f\in \cF_i, t\in [T]\}$ for each given $i \in I$. Recall that
$H_i=\sum_{f\in \cF_i} \sum_{t\in [T]} \chi_{f,t}=1$ indicates that one assignment in $\cF_i$ is \emph{successfully} made  in $\att(\gam)$. Since each $i$ has a unit capacity, $H_i=1$ iff there exists at least one $W_{f,t} \in \cW_i$ with $W_{f,t}=1$. Consequently, $H_i=\min\big(1, \sum_{t \in [T]} \sum_{f\in \cF_i} W_{f,t}\big)$, which can be viewed as a non-decreasing function over $\cW_i$. Therefore, $\{H_i\}$ can be regarded as a set of non-decreasing functions over disjoint subsets of negatively associated random variables of $\{\cW_i| i\in I\}$, which suggests that $\{H_i | i\in I\}$ are also negatively associated~\citep{joag1983negative}. By the result in \citep{shao2000comparison},  $\Var[H]=\Var[\sum_{i \in I} H_i] \le \sum_{i\in I} \Var[H_i]$.

Observe that $H_i$ is a Bernoulli random variable with mean  \\
$
\E[H_i]=\sum_{t \in [T]}\sum_{f\in \cF_i} \E[\chi_{f,t}]=\sum_{t \in [T]}\sum_{f\in \cF_i} \gam \cdot x_{f,t} \cdot p_{f,t} \le \gam$,  where the second equality is due to Equation~\eqref{eqn:att-1}, while the last inequality due to Constraint~\eqref{cons:i-1} in~ \LP~\eqref{obj-1}. Thus, $\Var[H_i]\le \gam \cdot (1-\gam)$ since $\gam \in [0,1/2]$. Therefore,  we claim that  
\[
 \Var[H]=\Var\bb{\sum_{i \in I} H_i} \le \sum_{i\in I} \Var[H_i] \le  \gam \cdot (1-\gam) \cdot B. \qedhere
 \] \end{proof}

\begin{figure}[th!]
\begin{minipage}{1\linewidth}
\center
 \begin{tikzpicture}
 \draw (0,0) node[minimum size=0.2mm,draw,circle] {$i=1$};
  \draw (0,-1) node[minimum size=0.2mm,draw,circle] {$i=2$};
   \draw (0,-2.2) node[minimum size=0.2mm,draw,circle] {$i=m$};
   
  \draw (3,0) node[minimum size=1mm,draw,circle] {$j=1$};
    \draw (3,-1) node[minimum size=1mm,draw,circle] {$j=2$};
    \draw (3,-2.2) node[minimum size=1mm,draw,circle] {$j=m$};

\draw[ultra thick, -] (0.5,0)--(2.5,0) node [blue, above, midway, sloped] {$e_1$};

\draw[ultra thick, -] (0.5,-1)--(2.5,-1) node [blue, above, midway, sloped] {$e_2$};

\draw[ultra thick, dotted] (0.5,-1.7)--(2.4,-1.7);

\draw[ultra thick, -] (0.5,-2.2)--(2.5,-2.2) node [blue, below, midway, sloped] {$e_m$};
\end{tikzpicture}
\end{minipage}\hfill
\begin{minipage}{1\linewidth}
\begin{align*}
& |I|=|J|=T=m, p=1, w=1, q_{j=t,t}=1, q_{j \neq t,t}=0, \forall t \in [T],\\
&x^*_{i=t,j=t,t}=1, \forall t\in [T], H=\sum_{i \in I} H_i, H_i=\Ber(\gam), \forall i \in I,\\
&\Var[H]=\gam \cdot (1-\gam) \cdot m=\gam \cdot (1-\gam) \cdot B,\\
& \optlp=\optoff=m,  \E[\att(\gam)]/\optoff=\gam.
\end{align*}
\end{minipage}
\caption{An example where $\att (\gam)$ achieves the tight  CR bound of $\gam$ and the  tight variance bound of $\gam (1-\gam) \cdot B$ simultaneously for any $\gam \in [0,1/2]$.} \label{fig:exam-2}
\end{figure}

\subsection{Tightness of the Variance Analysis of $\att(\gam)$ for any $\gam \in [0,1/2]$}\label{sec:tig-var-att}
\begin{lemma}\label{lem:var-att}
For any given integer $B$, there exists an instance of $\mpra$ such that $\att(\gam)$ achieves a variance of $\gam \cdot (1-\gam) \cdot B$ on the (random) number of successful assignments for any $\gam \in [0,1/2]$. 
\end{lemma}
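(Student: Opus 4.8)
The plan is to exhibit the ``diagonal'' instance already depicted in Figure~\ref{fig:exam-2} and verify that, on it, every offline agent is matched by an \emph{independent} Bernoulli trial with success probability exactly $\gam$, so that the variance upper bound of Theorem~\ref{thm:kad-sb} is attained with equality. Concretely, fix $m=B$ and take $|I|=|J|=T=m$ with a perfect-matching skeleton: offline agent $i$ is adjacent only to online agent $j=i$ through a single edge $e_i$ (one price, so each edge is itself an assignment), all acceptance probabilities equal $1$ and all profits equal $1$, and a deterministic diagonal arrival pattern $q_{j=t,\,t}=1$, $q_{j\neq t,\,t}=0$ for every $t\in[T]$.

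First I would pin down the LP optimum. Since agent $j=t$ arrives surely at $t$ and agent $i=t$ has unit capacity, setting $x^*_{i=t,j=t,t}=1$ (all other variables zero) satisfies Constraints~\eqref{cons:j-1} and~\eqref{cons:i-1} with equality and clearly maximizes the objective, giving $\optlp=m$. A clairvoyant optimal can match every arriving agent along its unique edge, so $\optoff=m$ as well; these confirm the benchmark values recorded in the figure. Next I would trace $\att(\gam)$ on this instance. The crucial structural fact is that offline agent $i=t$ participates in assignments only during round $t$, since $j=t$ is its only neighbor and it arrives only at $t$. Hence in the attenuation factor $\beta_{i=t,\,t}=1-\gam\sum_{t'<t}\sum_{f\in\cF_i} x_{f,t'}\,p_{f,t'}$ every term with $t'<t$ vanishes, so $\beta_{i=t,\,t}=1$. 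Plugging into Step~\eqref{alg:on-1}, agent $j=t$ samples its edge with probability $(x^*_{e_t,t}/q_{t,t})\cdot(\gam/\beta_{i=t,t})=\gam$; because round $t$ is the first (and only) round in which $i=t$ can be used, it is surely safe, and acceptance is certain. Therefore $H_{i=t}\sim\Ber(\gam)$, consistent with Equation~\eqref{eqn:att-1}.

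The heart of the argument, and the only place the construction genuinely matters, is \emph{independence}: each $H_{i=t}$ is a deterministic function of the sampling and acceptance coins used in round $t$ alone, and distinct rounds use disjoint, mutually independent coins. Thus $\{H_i\mid i\in I\}$ are \emph{mutually independent} $\Ber(\gam)$ variables rather than merely negatively associated, which upgrades the inequality $\Var[H]\le\sum_i\Var[H_i]$ from Theorem~\ref{thm:kad-sb} to an equality, yielding
\[
\Var[H]=\sum_{i\in I}\Var[H_i]=m\cdot\gam(1-\gam)=\gam(1-\gam)\cdot B,
\]
matching the claimed bound for every $\gam\in[0,1/2]$ and completing the proof.

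I do not expect a serious obstacle; the work is purely mechanical verification. The only two points requiring care are (i) checking that each agent's unique relevant round renders it always safe, so that its success probability is exactly $\gam$ and is \emph{not} attenuated below it, and (ii) confirming that the per-round coins are independent across the $m$ rounds. It is precisely this decoupling that forces the per-agent Bernoulli variances to add rather than to be dominated, turning the general upper bound into an exact equality and thereby establishing tightness. As a sanity check, the same instance also gives $\E[\att(\gam)]/\optoff=\gam$, so it simultaneously witnesses the tight CR bound of Lemma~\ref{lem:cr-att} and the tight variance bound here.
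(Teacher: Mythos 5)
Your proposal is correct and follows essentially the same route as the paper: the same diagonal instance from Figure~\ref{fig:exam-2}, the same verification that $\beta_{i=t,t}=1$ so each agent's edge is sampled with probability exactly $\gam$ and accepted surely, and the same observation that the $H_i$ are mutually independent $\Ber(\gam)$ variables, so the variances add exactly to $\gam(1-\gam)\cdot B$. Your write-up is in fact somewhat more explicit than the paper's (spelling out why the attenuation factors are trivial and why independence holds), but the construction and argument are identical.
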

\begin{proof}
Consider the instance as shown in Figure~\ref{fig:exam-2}. We have a graph $G=(I,J,E)$ with $|I|=|J|=|E|=m$, where $E=\{(i,j)| i=j \in [m]\}$ that consists of $m$ edges. There is one single price,  and thus, each edge one-one corresponds  to an assignment. For {ease} of notation, we use $e$ to represent the corresponding assignment. Each offline agent type $i$ has a unit capacity $b=1$. Let $T=m$ and during each time $t \in [T]$, $j=t$ arrives with probability one and no others will arrive. For every assignment $e \in E$ and time $t \in [T]$, $w_{e,t}=p_{e,t}=1$. We can verify that (1) an optimal solution to LP~\eqref{obj-1} is as follows: $x^*_{(i,j),t}=1$ if $i=j=t \in [m]$ and $0$ otherwise; (2) the optimal LP value and the performance of a clairvoyant optimal are both $m$. Note that $\att (\gam)$ works as follows: during each round $t \in [T]$ when $j =t$ arrives, $\att(\gam)$ selects the assignment $e=(i=t,j=t)$ with probability $\gam$ since $\beta_{i=t,t}=1$.

Let $H_i$ be the number of successful assignments on $i \in I$. We claim that $H_i \sim \Ber(\gam)$, and thus, $\Var[H_i]=\gam \cdot (1-\gam)$. Observe that $\{H_i\}$ are all independent. Therefore,
 \begin{align*}
  &  \Var[H]=\sum_{i \in I} \Var[H_i]=\gam \cdot (1-\gam) \cdot m=\gam \cdot (1-\gam) \cdot B, \\
  & \E[\att(\gam)] = \sum_{i \in I} \E[H_i]=\gam \cdot m.
 \end{align*}
\end{proof}

\section{An LP-based Sampling Policy without Attenuations (\algb)} \label{sec:samp}
 In this section, we present another LP-based sampling policy but without attenuations, which is formally stated in Algorithm~\ref{alg:b}.
\begin{algorithm}[th!]
\DontPrintSemicolon
\textbf{Offline Phase}: \;
Solve  \LP~\eqref{obj-1} and let $\{x_{f,t}\}$ be an optimal solution.\;
\textbf{Online Phase}:\;
 \For{$t=1,\ldots,T$}{
 Let an online agent (of type) $j$ arrive at time $t$. \;
 Sample an assignment $f=(i,j,k) \in \cF_j$ with probability $\gam \cdot x_{f,t}/q_{j,t}$.  \label{alg:on-3}\;
 \bluee{\tcc{At most one assignment will be sampled in Step~\eqref{alg:on-3} since $\sum_{f \in \cF_j} \gam \cdot  x_{f,t}/q_{j,t} \le\sum_{f \in \cF_j}   x_{f,t}/q_{j,t} \le 1$ by Const.~\eqref{cons:j-1} of \LP~\eqref{obj-1}.}} 
\eIf { $i$ is safe at $t$ (\ie the unit capacity remains),}
{Select the assignment $f$ (\ie assigning $j$ to $i$ and charging $j$ a price of $a_k$);}{Reject $f$.}\label{alg:on-4}
}
\caption{An LP-based sampling policy $\samp(\gam)$ for \mpra with $\gam \in [0,1]$.}
\label{alg:b}
\end{algorithm}

%section{Tight Competitive and Variance Analyses of $\samp(\gam)$ with $\gam \in [0,1]$}

\subsection{Competitive Analysis of $\samp(\gam)$}
\begin{theorem}\label{thm:samp-cr}
$\algb(\gam)$ with $\gam \in [0,1]$ achieves a competitive ratio of $\gam \cdot (1-\gam)$ for \mpra.
\end{theorem}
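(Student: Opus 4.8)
The plan is to mirror the competitive analysis of $\att(\gam)$ (Theorem~\ref{thm:kad-sa}), but to account for the fact that, absent attenuations, the probability that an offline agent stays safe can no longer be pinned down exactly and must instead be bounded from below. First I would check validity: since $\sum_{f\in\cF_j}\gam\, x_{f,t}/q_{j,t}\le\sum_{f\in\cF_j} x_{f,t}/q_{j,t}\le 1$ by Constraint~\eqref{cons:j-1}, Step~\eqref{alg:on-3} samples at most one assignment, so $\samp(\gam)$ is well defined for every $\gam\in[0,1]$ (and, unlike $\att$, needs no restriction $\gam\le 1/2$). For each $i\in I$ and $t\in[T]$ let $\SF_{i,t}=1$ indicate that $i$ is safe at the start of round $t$ and write $\alp_{i,t}=\E[\SF_{i,t}]$; let $\chi_{f,t}=1$ indicate that $f$ is successfully scheduled and accepted at $t$.

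The crux is the factorization $\E[\chi_{f,t}]=\gam\, x_{f,t}\, p_{f,t}\,\alp_{i,t}$ for every $f=(i,j,k)$. I would obtain this by noting that the event ``$i$ is safe at $t$'' is a function of the randomness in rounds $1,\dots,t-1$ only, and is therefore independent of the arrival of $j$, the sampling of $f$, and $j$'s acceptance at round $t$ (these are fresh, mutually independent coins by the assumptions on sources of randomness in \mpra). Multiplying $\Pr[j\text{ arrives}]=q_{j,t}$, $\Pr[\text{sample }f\mid j\text{ arrives}]=\gam\, x_{f,t}/q_{j,t}$, $\Pr[i\text{ safe}]=\alp_{i,t}$, and $\Pr[j\text{ accepts}]=p_{f,t}$ gives the claimed identity.

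Next I would lower-bound $\alp_{i,t}$. Since $i$ has unit capacity, $\alp_{i,t}=1-\sum_{t'<t}\sum_{f\in\cF_i}\E[\chi_{f,t'}]=1-\gam\sum_{t'<t}\sum_{f\in\cF_i} x_{f,t'}\,p_{f,t'}\,\alp_{i,t'}$. Bounding the trailing factors crudely by $\alp_{i,t'}\le 1$ and invoking Constraint~\eqref{cons:i-1} (with $b_i=1$) gives $\alp_{i,t}\ge 1-\gam\sum_{t'}\sum_{f\in\cF_i} x_{f,t'}\,p_{f,t'}\ge 1-\gam$ for all $t$. Plugging this back into the factorization yields $\E[\chi_{f,t}]\ge\gam(1-\gam)\,x_{f,t}\,p_{f,t}$, so by linearity $\E[\samp(\gam)]=\sum_{f,t} w_{f,t}\,\E[\chi_{f,t}]\ge\gam(1-\gam)\sum_{f,t} w_{f,t}\,x_{f,t}\,p_{f,t}=\gam(1-\gam)\,\optlp\ge\gam(1-\gam)\,\optoff$, where the last inequality is Lemma~\ref{lem:LP-1}. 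This establishes the competitive ratio $\gam(1-\gam)$.

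The main obstacle is conceptual rather than computational: it is exactly the loss incurred at the crude step $\alp_{i,t'}\le 1$. In $\att$ the attenuation factor $\beta_{i,t}$ was engineered to make $\alp_{i,t}=\beta_{i,t}$ hold with equality and the per-assignment success probability equal to $\gam\, x_{f,t}\, p_{f,t}$ exactly; here we deliberately over-provision safety, which boosts each survival probability to at least $1-\gam$ but leaves it uncontrolled, forcing the weaker multiplicative guarantee $\gam(1-\gam)$ instead of $\gam$. The only place where care is genuinely needed is the independence justification behind the factorization; everything downstream is the short deterministic estimate above.
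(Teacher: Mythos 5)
Your proof is correct, and it reaches the bound by a route that is genuinely different from the paper's. The paper fixes an offline agent and a time and computes the safety probability \emph{exactly} as a product over rounds: it introduces decoupled indicators $W_t$ (the event that an assignment involving that agent would be sampled and accepted at round $t$, \emph{ignoring} whether the agent is still safe), observes that unit capacity makes ``safe at the given time'' equivalent to ``$W_t=0$ in every earlier round,'' and uses independence across rounds to get $\E[\SF_{i,t}]=\prod_{t'<t}\bigl(1-\gam\sum_{f\in\cF_{i}}x_{f,t'}\,p_{f,t'}\bigr)$, which is then lower-bounded by $1-\gam$ via $\prod_{t'}(1-a_{t'})\ge 1-\sum_{t'}a_{t'}$ and Constraint~\eqref{cons:i-1}. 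You instead reuse the inductive bookkeeping from the paper's analysis of $\att(\gam)$ (Theorem~\ref{thm:kad-sa}): the exact recursion $\alp_{i,t}=1-\gam\sum_{t'<t}\sum_{f\in\cF_i}x_{f,t'}\,p_{f,t'}\,\alp_{i,t'}$, which follows from the disjointness of success events under unit capacity together with the independence of the round-$t$ coins from the history, followed by the crude bound $\alp_{i,t'}\le 1$. Both arguments rest on the same two pillars (fresh per-round randomness and Constraint~\eqref{cons:i-1}) and give the same guarantee, which Figure~\ref{fig:exam-3} shows is tight. What the paper's version buys is an exact closed form for the safety probability, in general strictly sharper than $1-\gam$, which it reuses as Inequality~\eqref{ineq:5/9} in the variance analysis of $\samp$; what your recursion buys is uniformity---the $\att$ and $\samp$ analyses become the same induction, with the equality $\alp_{i,t}=\beta_{i,t}$ of the attenuated case degrading gracefully to the one-sided bound $\alp_{i,t}\ge 1-\gam$ without attenuation---at the cost of discarding the exact product. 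You also correctly isolated the one step needing care: the factorization $\E[\chi_{f,t}]=\gam\,x_{f,t}\,p_{f,t}\,\alp_{i,t}$ is legitimate precisely because $\samp$'s sampling probabilities are precomputed from the LP and do not depend on the current state, so the safety indicator, being measurable with respect to rounds $1,\dots,t-1$, is independent of the round-$t$ arrival, sampling, and acceptance coins.
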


\begin{proof}
Consider a given $\bi$ and a given time $\bat \in [T]$. We  show that $\bi$ is safe at (the beginning of) $\bat$ with a probability {of} at least $1-\gam$. Observe that since $\bi$ has a unit capacity, $\bi$ is safe at $\bat$ iff no successful assignment $f \in \cF_{\bi}$ has ever been made before $\bat$. 

For each $t<\bat$, let $W_t=\sum_{f=(\bi,j,*)\in \cF_{\bi}} X_{j,t} \cdot Y_{f,t} \cdot Z_{f,t}$, where $X_{j,t}\sim \Ber(q_{j,t})$, $Y_{f,t}\sim \Ber(\gam  \cdot x_{f,t}/q_{j,t})$, and $Z_{f,t} \sim \Ber(p_{f,t})$ are all Bernoulli random variables simulating the three independent random events, which are $j$ arrives at $t$ ($X_{j,t}=1$), $f$ gets sampled at $t$ in 
$\samp$ ($Y_{f,t}=1$), and $j$ accepts $f$ at $t$ ($Z_{f,t}=1$), respectively. Observe that (1)  $W_t \in \{0,1\}$ since at any time $t$, there is at most one single online agent arriving, denoted by $j$, and at most one assignment $f \in \cF_j$ gets sampled in Step~\eqref{alg:on-3} of $\samp$, and (2) $\{W_t | 1\le t <\bat\}$ are all independent. Let $\SF_{\bi,\bat}=1$ indicate that $\bi$ is safe at $\bat$. We see that  $\bi$ is safe at $\bat$ iff $W_t=0$ for all $t<\bat$. Thus,
\begingroup
\allowdisplaybreaks
\begin{align}
&\E[\SF_{\bi,\bat}]=\Pr\bb{\bigwedge_{t<\bat} (W_t=0)}=\prod_{t<\bat} \Pr[W_t =0]\nonumber\\
& \mbox{\bp{by independence of $\{W_{t}| t<\bat \}$}}  \nonumber\\
&=\prod_{t<\bat} \bp{1-\E[W_t =1]} =\prod_{t<\bat} \bp{1- \E\bb{\sum_{f=(\bi,j,*)\in \cF_{\bi}} X_{j,t} \cdot Y_{f,t} \cdot Z_{f,t}}}  \nonumber\\
&=\prod_{t<\bat} \bp{1-\sum_{f=(\bi,j,*)\in \cF_{\bi}}  \E[ X_{j,t}] \cdot \E[Y_{f,t}] \cdot \E[Z_{f,t}]}  \nonumber\\
&\mbox{\bp{by independence of $\{X_{j,t}, Y_{f,t}, Z_{f,t}\}$}}  \nonumber\\
&=\prod_{t<\bat} \bp{1-\sum_{f=(\bi,j,*)\in \cF_{\bi}}   q_{j,t} \cdot (\gam  \cdot x_{f,t}/q_{j,t}) \cdot p_{f,t} }  \nonumber\\
&=\prod_{t<\bat} \bp{1-\sum_{f=(\bi,j,*)\in \cF_{\bi}} \gam  \cdot x_{f,t} \cdot p_{f,t} } \nonumber\\
&\ge 1-\sum_{t<\bat} \sum_{f=(\bi,j,*)\in \cF_{\bi}} \gam  \cdot x_{f,t} \cdot p_{f,t} \ge 1-\gam. \label{ineq:5/9}\\
&\mbox{\bp{by Constraint~\eqref{cons:i-1} of \LP~\eqref{obj-1}}}  \nonumber
\end{align}
\endgroup

The analysis above suggests that for any given assignment $\bff =(\bi,\bj,*)\in \cF_{\bi}$, it is successfully made at $\bat$ by $\samp(\gam)$, denoted by $\chi_{\bff, \bat}=1$, with probability {of} at least
\begin{align*}
\E[\chi_{\bff,\bat}]&=\E\bb{X_{\bj,\bat} \cdot Y_{\bff,\bat} \cdot \SF_{\bi, \bat} \cdot  Z_{\bff,\bat}} \ge (1-\gam) \cdot \gam \cdot x_{\bff,\bat} \cdot p_{\bff,\bat}.
\end{align*}
Therefore, we claim that the expected amount of profit gained by $\samp(\gam)$ should be at least 
\begin{align*}
\E[\samp(\gam)]& =\sum_{f\in \cF} \sum_{t\in [T]} w_{f,t} \cdot \E[\chi_{f,t}] \ge  (1-\gam) \cdot \gam \cdot x_{f,t} \cdot p_{f,t}  w_{f,t} \\
&=  (1-\gam) \cdot \gam \cdot \optlp \ge  (1-\gam) \cdot \gam \cdot \optoff, 
\end{align*}
where $\optlp$ refers to  the optimal value of LP-\eqref{obj-1} and $\optoff$ the performance of a clairvoyant optimal, and the last inequality is due to  Lemma~\ref{lem:LP-1}. We establish the claim that $\samp(\gam)$ achieves a competitive ratio  of at least $\gam \cdot (1-\gam)$.
 \end{proof}
 
 %%%for any $\gam \in [0,1]$
 \subsection{Tightness of the Competitive Analysis of $\samp(\gam)$ for any $\gam \in [0,1]$}
\begin{lemma}\label{lem:cr-samp}
There exists an instance of $\mpra$ such that $\samp(\gam)$ achieves a competitive ratio of $\gam \cdot (1-\gam)$ for any $\gam \in [0,1]$ regardless of the benchmark LP.
\end{lemma}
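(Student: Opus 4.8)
The plan is to produce a single-offline-agent instance in the spirit of Figure~\ref{fig:exam-1}, but re-tuned so that the two multiplicative losses appearing in the proof of Theorem~\ref{thm:samp-cr} — the sampling factor $\gam$ and the safety factor $(1-\gam)$ — are forced to be tight at the same time. The guiding observation is that the profit of $\samp(\gam)$ can be written as $\gam \sum_{f,t} x_{f,t}\, p_{f,t}\, w_{f,t}\cdot(\text{safety of } i \text{ at } t)$, so the ratio achieved against $\optlp$ equals $\gam$ times a \emph{profit-weighted average of safety probabilities}, each lying in $[1-\gam,1]$ by Inequality~\eqref{ineq:5/9}. Hence to drive the ratio down to $\gam(1-\gam)$ I must place asymptotically all of the LP profit on assignments whose offline agent is safe with probability arbitrarily close to $1-\gam$ when it is served.

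Concretely, I would take one offline agent $*$ with unit capacity and $T=2$, with all acceptance probabilities equal to one. At $t=1$ a ``decoy'' online agent arrives surely and is joined to $*$ by an edge of modest profit $1$; at $t=2$ a ``jackpot'' online agent arrives with small probability $\ep$ and is joined to $*$ by an edge of large profit $V=1/\ep^2$. The first step is to solve LP~\eqref{obj-1} and check that its optimum is \emph{unique}: since $V$ is huge, the jackpot variable is pushed to its ceiling $\ep$, and the decoy variable then fills the residual budget at $1-\ep$. I would then verify that $\optoff=\optlp$ by noting a clairvoyant takes the jackpot whenever it appears and the decoy otherwise, giving $\optoff=\ep\cdot V+(1-\ep)\cdot 1=\optlp$; this equality is exactly what makes the claimed tightness unconditional of the benchmark LP, in line with Lemma~\ref{lem:LP-1}.

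Next I would trace $\samp(\gam)$ on this instance. The decoy is sampled and accepted with probability $\gam(1-\ep)$, which pre-consumes the budget and pulls the safety of $*$ at $t=2$ down to $1-\gam(1-\ep)$; the jackpot is then sampled with probability $\gam$ and succeeds with probability $\ep\cdot\gam\cdot\bp{1-\gam(1-\ep)}$. Summing the two profit contributions and dividing by $\optoff$ yields a closed-form ratio in $\gam$ and $\ep$; letting $\ep\to 0_+$, I expect the jackpot terms (both of order $1/\ep$) to dominate the decoy terms (of order $1$) in numerator and denominator alike, so the ratio tends to $\gam\cdot(1-\gam)$ for every $\gam\in[0,1]$, establishing that no bound better than $\gam(1-\gam)$ can hold.

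I expect the main obstacle to be the delicate balancing of the two profit scales. The decoy must carry enough profit that the LP genuinely spends the whole budget on it — thereby opening the full $(1-\gam)$ safety deficit at the jackpot — yet its profit must be asymptotically negligible against the jackpot's, so that its own favourable factor (safety one, hence ratio $\gam$) does not contaminate the limit. Choosing $V=1/\ep^2$, so the jackpot contributes $\Theta(1/\ep)$ while the decoy contributes $\Theta(1)$, is what reconciles these competing demands; the rest is the routine verification that $\optoff=\optlp$ holds exactly and that the limit is $\gam(1-\gam)$ rather than some nearby value. This also clarifies structurally why the worst case for $\samp$ differs sharply from that for $\att$: the loss here is driven by a rare high-value arrival that survives only through a partially depleted budget, rather than by the disjoint matching of Figure~\ref{fig:exam-2}.
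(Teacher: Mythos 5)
Your proposal is correct and is essentially the paper's own proof: the same two-round, single-offline-agent instance with a decoy of profit $1$ at $t=1$ and a jackpot of profit $1/\ep^2$ arriving with probability $\ep$ at $t=2$ (the paper's Figure~\ref{fig:exam-3}, where your ``jackpot arrives with probability $\ep$'' is realized by adding a zero-profit agent $j=2$ with $q_{2,2}=1-\ep$ so the arrival distribution sums to one), the same LP optimum $x^*_{f_1,t=1}=1-\ep$, $x^*_{f_3,t=2}=\ep$ with $\optlp=\optoff=1/\ep+1-\ep$, and the same limit computation showing the ratio tends to $\gam\cdot(1-\gam)$ as $\ep \rightarrow 0_{+}$. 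No gaps.
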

\begin{comment}
\begin{example}
Consider such an instance as shown in Figure~\ref{fig:exam-3}. We have one single offline agent with a unit budget that is connected to three online agents indexed by $j=1,2,3$, respectively. Set $T=2$, and the arriving distributions at $t=1$ and $t=2$ are $\mathbf{q}_1=(1, 0, 0)$ and $\mathbf{q}_2=(0,  1-\ep, \ep)$ with $\ep>0$, respectively. There is one single price, and thus, each edge itself represents an assignment. For the ease of notation, we use $j$ to index the three edges and the corresponding assignments, \ie $f_j=(i=1,j,*)$ for $j=1,2,3$. Set $p_f=1$ for all $f$, and the profit on assignments is as follows: $w_{f_1,t}=1$, $w_{f_2,t}=0$, and $w_{f_3,t}=1/\ep^2$ for both $t=1,2$.  \hfill $\blacksquare$
\end{example}
\end{comment}

\begin{proof}
Consider the instance as shown in Figure~\ref{fig:exam-3}, which is almost the same as that in Figure~\ref{fig:exam-1} except that the profit on $f_3$ is $w_{f_3,t}=1/\ep^2$ instead of $1/\ep$ for both $t=1,2$.  We can verify the following facts. First, \LP~\eqref{obj-1} has such an optimal solution that $x_{f_1,t=1}^*=1-\epsilon, x^*_{f_3,t=2}=\ep$ with all rest being zeros, and the corresponding  optimal value is $\optlp=1/\ep+1-\ep$. The performance of a clairvoyant optimal  is $\optoff= 1/\epsilon+1-\epsilon$.
Second, $\samp(\gamma)$ with $\gamma \in [0,1]$ samples $f_1$ with probability $\gam  (1-\ep)$ when $j=1$ arrives at $t=1$, and it samples$f_3$ with probability $\gam$ if $j=3$ arrives at $t=2$. As a result, the probability that $f_1$ is successfully made is $\gam (1-\ep)$ and that of $f_3$ is $\ep \cdot \gam \cdot \E[\SF_{i=1,t=2}]=\ep \cdot \gam \cdot (1-\gam \cdot (1-\ep))=\ep \cdot \gam \cdot (1-\gam+\gam \cdot \ep)$. This suggests $\samp(\gam)$ gains an expected amount of profit of $\gam (1-\ep)+ (1/\ep)\cdot \gam \cdot (1-\gam+\gam \cdot \ep)$. Thus,  we claim that $\samp(\gamma)$ achieves a CR of $\gam(1-\gam)$ when $\ep \rightarrow 0_{+}$  based on the benchmark of either LP~\eqref{obj-1} or the clairvoyant optimal for any $\gam \in [0,1]$.
\end{proof}

\begin{figure}[ht!]
\begin{minipage}{1\linewidth}
 \begin{tikzpicture}
 \draw (0,0) node[minimum size=0.2mm,draw,circle] {$i=1$};
 % \draw (0,-1) node[minimum size=0.2mm,draw,circle] {$i_2$};
  
  \draw (3,0) node[minimum size=1mm,draw,circle] {$j=1$};
    \draw (3,-1) node[minimum size=1mm,draw,circle] {$j=2$};
    \draw (3,-2) node[minimum size=1mm,draw,circle] {$j=3$};

\draw (4.5,1) node {$t=1$};     
\draw (6,1) node {$t=2$};   
\draw (4.5,0) node {$q_{1,1}=1$,};
\draw (4.5,-1) node {$q_{2,1}=0$,};
\draw (4.5,-2) node {$q_{3,1}=0$,};
\draw (6.2,0) node {$q_{1,2}=0$,};
\draw (6.2,-1) node {$q_{2,2}=1-\ep$,};
\draw (6.2,-2) node {$q_{3,2}=\ep$.};
\draw[ultra thick, -] (0.5,0)--(2.5,0) node [blue, above, midway, sloped] {$w_1=1$};

\draw[ultra thick, -] (0.5,0)--(2.5,-1) node [blue, above, near end, sloped] {$w_2=0$};

\draw[ultra thick, -] (0.5,0)--(2.5,-2) node [blue, below, midway, sloped] {$w_3=1/\ep^2$};

\end{tikzpicture}
\end{minipage}\hfill
\begin{minipage}{1\linewidth}
\begin{align*}
& p =1,T=2, x_{f_1,t=1}^*=1-\ep, x^*_{f_3,t=2}=\ep, \\
&\optlp=1/\ep+1-\ep, \optoff=1/\ep+1-\ep,\\
& \lim_{\ep \rightarrow 0_{+}}\frac{\E[\samp(\gam)]}{\optoff}=\gam (1-\gam).
\end{align*}
\end{minipage}
\caption{An example highlighting the \emph{unconditional} tightness of competitive analysis of $\samp(\gam)$ for any $\gam \in [0,1]$.} \label{fig:exam-3}
\end{figure}
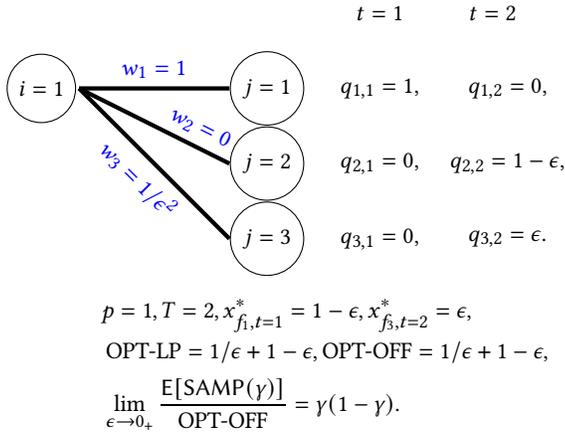
% which is irrespective of the benchmark LP.

\begin{comment}
Due to the space limit, we defer the proof of Theorem~\ref{thm:main-2} to Appendix, which includes the tight competitive and variance analyses of $\samp(\gam)$ for any $\gam \in [0,1]$.
\end{comment}

\subsection{Variance Analysis of $\samp(\gam)$}
Recall that for each $f\in \cF$ and $t \in [T]$, $\chi_{f,t}=1$ indicates $f$ is successfully made (scheduled and accepted) in $\samp(\gam)$ with $\gam \in [0,1]$. Let $H_i=\sum_{f\in \cF_i} \sum_{t \in [T]} \chi_{f,t}$ and $H=\sum_{i \in I}H_i$ denote the numbers of successful assignments involving $i$ and in total, respectively. 
%Recall that $B=\sum_{i \in I} b_i$. 
\begin{theorem}\label{thm:var-lb}
$\Var[H] \le \ogam \cdot (1-\ogam) \cdot B$, where $\ogam=\min(1/2, \gam)$ and $B=\sum_{i \in I} b_i$ with $\gam \in [0,1]$.
\end{theorem}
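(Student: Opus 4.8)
The plan is to reuse the negative-association (NA) machinery developed for $\att(\gam)$ in the proof of Theorem~\ref{thm:kad-sb}, and to replace only the per-agent variance bound, which is where the truncated value $\ogam=\min(1/2,\gam)$ enters. First I would set $W_{f,t}=X_{j,t}\cdot Y_{f,t}\cdot Z_{f,t}$ for each $f=(i,j,k)$ and $t$, where $X_{j,t}$, $Y_{f,t}$, $Z_{f,t}$ are the Bernoulli indicators of arrival, sampling in Step~\eqref{alg:on-3}, and acceptance, respectively. Since at most one assignment is sampled per round, $\sum_{f\in\cF}W_{f,t}\le 1$, so the Zero-One Principle makes $\{W_{f,t}\mid f\in\cF\}$ negatively associated for each fixed $t$; independence across distinct rounds then yields that the whole family $\{W_{f,t}\mid f\in\cF,\,t\in[T]\}$ is negatively associated. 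Writing $H_i=\min\bp{1,\sum_{t\in[T]}\sum_{f\in\cF_i}W_{f,t}}$ as a non-decreasing function of the disjoint block $\cW_i=\{W_{f,t}\mid f\in\cF_i,\,t\in[T]\}$, the collection $\{H_i\mid i\in I\}$ is negatively associated, which gives $\Var[H]\le\sum_{i\in I}\Var[H_i]$ exactly as in the $\att$ analysis.

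The new ingredient is the bound on each $\Var[H_i]$. Each $H_i$ is a Bernoulli variable (unit capacity), so $\Var[H_i]=p_i(1-p_i)$ with $p_i:=\E[H_i]$. I would bound $p_i$ by a union bound: since the actual success indicator satisfies $\chi_{f,t}=X_{j,t}Y_{f,t}\SF_{i,t}Z_{f,t}\le W_{f,t}$ pointwise, we have $p_i=\Pr[H_i\ge 1]\le\sum_{t\in[T]}\sum_{f\in\cF_i}\E[W_{f,t}]=\sum_{t\in[T]}\sum_{f\in\cF_i}\gam\cdot x_{f,t}\cdot p_{f,t}\le\gam$, where the last step is Constraint~\eqref{cons:i-1} with $b_i=1$. (If a sharper value were wanted, the same computation as in Ineq.~\eqref{ineq:5/9} in fact gives $p_i=1-\prod_{t}\bp{1-\sum_{f\in\cF_i}\gam x_{f,t}p_{f,t}}$, but $p_i\le\gam$ is all that is needed here.)

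It then remains to convert $p_i\le\gam$ into the stated per-agent bound via a short case split on the map $p\mapsto p(1-p)$. When $\gam\le 1/2$ we have $\ogam=\gam$ and $p_i\le\gam\le 1/2$; since $p(1-p)$ is non-decreasing on $[0,1/2]$, $\Var[H_i]=p_i(1-p_i)\le\gam(1-\gam)=\ogam(1-\ogam)$. When $\gam>1/2$ we have $\ogam=1/2$, and the universal bound $p_i(1-p_i)\le 1/4=\ogam(1-\ogam)$ holds for every $p_i\in[0,1]$. In either case $\Var[H_i]\le\ogam(1-\ogam)$, and summing over the $B=\sum_{i\in I}b_i$ offline agents yields $\Var[H]\le\ogam(1-\ogam)\cdot B$.

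I expect the NA portion to be entirely routine, being a transcription of the $\att$ argument; the only genuine subtlety is conceptual rather than technical—recognizing that because $\samp$ applies no attenuation, the match probability $p_i$ can be pushed all the way toward $1$ when $\gam$ is large, so the worst-case per-agent variance is not $\gam(1-\gam)$ but is instead capped at $1/4$ once $p_i$ crosses $1/2$. This is exactly what the truncation $\ogam=\min(1/2,\gam)$ records, and isolating that case split is the main (modest) obstacle.
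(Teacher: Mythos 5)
Your proof is correct and follows essentially the same route as the paper's: the identical negative-association machinery (Zero-One Principle per round, independence across rounds, $H_i=\min\bigl(1,\sum_{t}\sum_{f\in\cF_i}W_{f,t}\bigr)$ as monotone functions of disjoint blocks) to get $\Var[H]\le\sum_{i\in I}\Var[H_i]$, followed by $\E[H_i]\le\gam$ and the $p(1-p)$ case split that produces $\ogam=\min(1/2,\gam)$. The only cosmetic difference is that you bound $\E[H_i]$ by a direct union bound over $\E[W_{f,t}]$, whereas the paper invokes the product-form safe-probability bound of Inequality~\eqref{ineq:5/9}; these are the same estimate, and the paper leaves the final case split implicit where you spell it out.
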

\begin{proof}
For each $f=(i,j,k)$, let $X_{j,t}=1$, $Y_{f,t}=1$, and $Z_{f,t}=1$ indicate that $j$ arrives at $t$, $f$ gets sampled at $t$ in $\samp(\gam)$, and $j$ accepts $f$ at $t$, respectively. For each $f$ and $t$, let $W_{f,t}=X_{j,t} \cdot Y_{f,t} \cdot Z_{f,t}$. Observe that $\E[W_{f,t}]=\gam \cdot x_{f,t}\cdot p_{f,t}$. For each $i$ and $t$, let $W_{i,t}=\sum_{f \in \cF_i} W_{f,t}$ and $W_i=\sum_{t \in [T]} W_{i,t}$. Following analyses in the proof of Theorem~\ref{thm:samp-cr}, we have (1) $\{W_{i,t}| t\in [T]\}$ are independent for each given $i \in I$, and each $W_{i,t} \in \{0,1\}$ with $\E[W_{i,t}]=\sum_{f \in \cF_i} \gam \cdot x_{f,t}\cdot p_{f,t}$; and (2) $i$ is not matched in the end, \ie $H_i=0$, iff $W_i=0$. Thus, we see that $H_i=\min(W_i,b_i)=\min(W_i,1)$. Following similar analyses in the proof of Theorem~\ref{thm:kad-sb}, we can show that $\{W_{f,t}| f\in \cF, t\in [T]\}$ are negatively associated, and so are $\{H_i | i\in I\}$. Therefore, $\Var[H]=\Var[\sum_{i \in I} H_i] \le \sum_{i\in I} \Var[H_i]$. 

Consider a given $i \in I$. Observe that $i$ is not matched is equivalent to that $i$ stays safe to the end of $T$, denoted by $\SF_{i,T+1}$. By Inequality~\eqref{ineq:5/9}, $\E[H_i]=1-\Pr[H_i=0]=1-\E[\SF_{i,T+1}] \le \gam$,
%\begin{align*}
%&\E[H_i]=1-\Pr[H_i=0]=1-\E[\SF_{i,T+1}] \le \gam,
%\end{align*}
which suggests that $\Var[H_i] \le \ogam \cdot (1-\ogam)$ with $\ogam=\min(1/2, \gam)$. Consequently, we have $\Var[H]=\Var[\sum_{i \in I} H_i] \le \sum_{i\in I} \Var[H_i] \le  \ogam \cdot (1-\ogam) \cdot B$.
\end{proof}

\subsection{Tightness of the Variance Analysis of $\samp$}
\begin{lemma}\label{lem:samp-var}
For any integer $B$, there exists an instance of $\mpra$ such that $\samp(\gam)$ achieves a variance of $\ogam \cdot (1-\ogam) \cdot B$ on the (random) number of successful assignments for any $\gam \in [0,1]$, where $\ogam=\min(1/2,\gam)$.
\end{lemma}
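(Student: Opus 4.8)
The plan is to mirror the proof of Lemma~\ref{lem:var-att} but to additionally \emph{throttle} the probability that each offline agent gets matched down to the variance-maximizing value $\ogam=\min(1/2,\gam)$. Recall from the proof of Theorem~\ref{thm:var-lb} that each $H_i$ is Bernoulli with $\Var[H_i]=\E[H_i]\,(1-\E[H_i])$ and that $\Var[H]\le\sum_{i\in I}\Var[H_i]$; to make this tight I would build an instance whose graph splits into $B$ vertex- and time-disjoint components, one per offline agent, so that $\{H_i\}$ become mutually \emph{independent} and hence $\Var[H]=\sum_i\Var[H_i]$ holds with equality. It then suffices to engineer a single component on which $\E[H_i]=\ogam$ exactly, which gives $\Var[H_i]=\ogam(1-\ogam)$.

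For a single component I would take one offline agent $i$ of unit capacity joined by one edge (one price, so the edge is the assignment) to one online agent $j$ arriving with probability $q_j=1$ in a dedicated round, set $w=1$, and set the acceptance probability $p=\min(1,1/(2\gam))$. A short check shows the LP optimum places $x^*=1$ on this edge (the arrival constraint~\eqref{cons:j-1} binds, since $1/p=\max(1,2\gam)\ge 1$ leaves the capacity constraint~\eqref{cons:i-1} slack), so $\samp(\gam)$ samples the edge upon $j$'s arrival with probability $\gam x^*/q_j=\gam$. Because $i$ is reachable only in this one round, it is always safe when sampled, whence $\E[H_i]=1\cdot\gam\cdot p=\gam\cdot\min(1,1/(2\gam))=\min(\gam,1/2)=\ogam$, so $H_i\sim\Ber(\ogam)$. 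Placing $B$ such components on $B$ disjoint rounds yields $\Var[H]=\sum_{i}\ogam(1-\ogam)=\ogam(1-\ogam)\cdot B$, as claimed.

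The conceptual crux, and the step I expect to be the real obstacle, is the case $\gam>1/2$. There the naive single-edge instance used for $\att$ (where $H_i\sim\Ber(\gam)$) gives only $\gam(1-\gam)<1/4$, strictly below the target $1/4$, because the match probability overshoots the variance-maximizing point $1/2$. One cannot repair this with a single $\gam$-independent instance: by the product formula $\E[H_i]=1-\prod_t\bp{1-\gam s_t}$ underlying~\eqref{ineq:5/9}, $\E[H_i]$ is a polynomial in $\gam$ for any fixed set of LP masses $\{s_t\}$, whereas the required value $\min(\gam,1/2)$ is piecewise linear with a kink at $\gam=1/2$; hence the instance must legitimately depend on $\gam$. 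The design idea that resolves this is to deliberately cap the matching mass routed through $i$ at $1/(2\gam)$, realized by lowering the acceptance probability to $1/(2\gam)$ (or equivalently by lowering $q_j$ to $1/(2\gam)$ with a dummy arrival absorbing the remainder), so that $\samp$'s match probability lands exactly on $1/2$ for every $\gam\in(1/2,1]$. Everything else—the LP value, the sampling probability, and the Bernoulli variance computation—is then routine verification.
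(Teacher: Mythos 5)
Your proposal is correct and is essentially the paper's own proof: the paper's worst-case instance (Figure~\ref{fig:exam-4}) is exactly your construction of $B$ vertex- and time-disjoint single-edge components, one offline agent per dedicated round with arrival probability one and acceptance probability $p=\min(1,(1/2)/\gam)$, yielding independent $H_i\sim\Ber(\ogam)$ and $\Var[H]=\ogam(1-\ogam)\cdot B$. Your observation that throttling $p$ is what handles the $\gam>1/2$ case is precisely the point of the paper's modification of the $\att$ instance from Figure~\ref{fig:exam-2}.
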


\begin{proof}
Consider the instance as shown in Figure~\ref{fig:exam-4}, which is almost the same as that in Figure~\ref{fig:exam-2} except that $p_{f,t}=\min(1, (1/2)/\gam)$ instead of $1$ for all $f$ and $t$. 
$\samp(\gam)$ works as follows: during each round $t \in [T]$ when $j =t$ arrives, $\samp(\gam)$ selects the assignment $f=(i=t,j=t)$ with probability $\gam$, which suggests $f$ is successfully made with probability $\gam \cdot p_f=\ogam$.

Let $H_i$ be the number of successful assignments on $i \in I$. We claim that $H_i \sim \Ber(\ogam)$, and thus, $\Var[H_i]=\ogam \cdot (1-\ogam)$. Observe that $\{H_i\}$ are all independent, therefore,
 \begin{align*}
      \Var[H]&=\Var\bb{\sum_{i\in I} H_i}=\sum_{i \in I} \Var[H_i]\\
      &=\ogam \cdot (1-\ogam) \cdot m=\ogam \cdot (1-\ogam) \cdot B. 
 \end{align*}
\end{proof}
\begin{figure}[ht!]
\begin{minipage}{1\linewidth}
\center
 \begin{tikzpicture}
 \draw (0,0) node[minimum size=0.2mm,draw,circle] {$i=1$};
  \draw (0,-1) node[minimum size=0.2mm,draw,circle] {$i=2$};
   \draw (0,-2.2) node[minimum size=0.2mm,draw,circle] {$i=m$};
   
  \draw (3,0) node[minimum size=1mm,draw,circle] {$j=1$};
    \draw (3,-1) node[minimum size=1mm,draw,circle] {$j=2$};
    \draw (3,-2.2) node[minimum size=1mm,draw,circle] {$j=m$};
\draw[ultra thick, -] (0.5,0)--(2.5,0) node [blue, above, midway, sloped] {$e_1$};

\draw[ultra thick, -] (0.5,-1)--(2.5,-1) node [blue, above, midway, sloped] {$e_2$};

\draw[ultra thick, dotted] (0.6,-1.7)--(2.4,-1.7);

\draw[ultra thick, -] (0.5,-2.2)--(2.5,-2.2) node [blue, below, midway, sloped] {$e_m$};
\end{tikzpicture}
\end{minipage}\hfill
\begin{minipage}{1\linewidth}
\begin{align*}
& |I|=|J|=T=m, {p=\min(1, (1/2)/\gam)},\\
& q_{j=t,t}=1, q_{j \neq t,t}=0, \forall t \in [T],\\
&x^*_{i=t,j=t,t}=1, \forall t\in [T],\\
& H=\sum_{i \in I} H_i, H_i=\Ber(\ogam), \forall i \in I,\\
&\Var[H]=\ogam \cdot (1-\ogam) \cdot m=\ogam \cdot (1-\ogam) \cdot B.
\end{align*}
\end{minipage}
\caption{An example highlighting the tightness of variance analysis of $\samp(\gam)$ for any $\gam \in [0,1]$.} \label{fig:exam-4}
\end{figure}

\section{Conclusion}
In this paper, we study matching and pricing simultaneously emerging in various gig platforms. We focus on two fundamental LP-based sampling algorithms and provide tight competitive and variance analyses for each of them. Our research opens a few directions. The immediate one is to extrapolate the current variance-analysis techniques to more general settings such as a weighted objective, \ie the variance {of} the total profit instead of the total number of assignments, and less strict arriving assumptions, \eg random arrival order and adversary. We expect more technical challenges there, which perhaps require us to add extra assumptions to make the problem tractable.

%\begin{acks}
%xx
%\end{acks}

 \newpage
\bibliographystyle{ACM-Reference-Format}
\bibliography{stable_ref}

\end{document}